\newcommand{\urlBiBTeX}[1]{\url{#1}}
\newcommand{\urlbibteX}[1]{\url{#1}}
\def\BibTeX{{\rm B\kern-.05em{\sc i\kern-.025em b}\kern-.08em
    T\kern-.1667em\lower.7ex\hbox{E}\kern-.125emX}}
\def\conv{*}
\def\deconv{\oslash}
\def\Smin{\underline{S}}
\def\Smax{\overline{S}}
\def\Four{{\mathcal{F}}}
\def\Legendre{{\mathcal{L}}}
\newtheorem{lemma}{Lemma}
\begin{document}
\title{A System Theoretic Approach to Bandwidth Estimation}

\author{
J\"{o}rg Liebeherr, ~Markus Fidler,
~Shahrokh Valaee
\vspace{-4mm}
\thanks{
J. Liebeherr and S. Valaee are with the Department of Electrical and Computer Engineering, University of Toronto. M. Fidler is with the Technical University of Darmstadt.
}
\thanks{
The  research in this paper is supported in part by the National
Science Foundation under grants CNS-0435061,  two NSERC Discovery
grants, and an Emmy Noether grant from the German Research
Foundation.   }
}
\maketitle

\begin{abstract}
It is shown that bandwidth estimation in packet networks can be viewed in  terms of min-plus linear system theory. The available bandwidth of a link or complete path is expressed in terms of a {\em service curve}, which is a function that appears in the network calculus to express the service available to a traffic flow.
The service curve is estimated based on measurements of a sequence of probing packets or passive measurements of  a sample path of arrivals.  It is shown that existing bandwidth estimation methods can be derived in the min-plus algebra of the network calculus, thus providing further mathematical justification for these methods.
Principal difficulties of   estimating available bandwidth from measurement of network probes are related to potential non-linearities
of the underlying network. When networks are viewed as systems that operate either in a linear or in a non-linear regime, it is  argued that probing schemes extract the most information at a point when the network crosses from a linear to a non-linear regime. Experiments on the Emulab testbed  at the University of Utah evaluate the robustness of the system theoretic interpretation of networks in practice. Multi-node experiments evaluate how well the  convolution operation of the min-plus algebra
provides estimates for the available bandwidth of a path from estimates of individual links.

\end{abstract}
\begin{keywords}
Network Calculus, Bandwidth Estimation, Min-Plus Algebra.
\end{keywords}

\section{Introduction}

The benefits of knowing how much network bandwidth is available to
an application has motivated the development of techniques that
infer bandwidth availability from traffic
measurements \cite{akella:bfind,cprobe,dovrolis:packetdispersion,hu:IGI,jain:slops,jain:pathload,jain:pathvar,kapoor:capprobe,topp,paxson:measurements,ribeiro:pathchirp,strauss:spruce}.
With a large number of methods available  and much empirical experience gained, recently an increasing effort has been
put towards improving the theoretic understanding of measurement based estimation of available bandwidth, e.g., \cite{baccelli06,loguinov04,loguinov08,BinTariq}.

This paper presents a new foundational approach to reason about
available bandwidth estimation as the analysis of a  min-plus
linear system. Min-plus linear system theory has provided the
mathematical underpinning for the deterministic network calculus
\cite{chang:performanceguarantees,leboudec:networkcalculus}. We will use min-plus system
theory to explain how
bandwidth estimation methods infer information about a network
and find bandwidth estimation methods that can extract the most
information from a network. Some key difficulties encountered when
measuring available bandwidth become evident in a system theoretic
view.

We view bandwidth estimation as the problem of determining
unknown functions that describe the available bandwidth based on
measurements of a sequence of probing packets or passive
measurements of  a sample path of arrivals. These functions
correspond to the {\em service curves} that appear in
the network calculus \cite{cruz:qualityofserviceguarantees}, where
they are used to express the available service at a network link or
an end-to-end path. Working within the context of the network
calculus, we can apply a result that allows us to compute the
service curve of a network path from service curves of
the links of the path. This is done by applying the convolution
operator of the min-plus algebra
\cite{chang:performanceguarantees,leboudec:networkcalculus}. We
explore how well  the convolution of the available bandwidth of
multiple links, expressed as service curves, can describe
the available bandwidth of an end-to-end path.

Our  formulation of available bandwidth estimation in min-plus linear system theory reveals that the underlying problem is intrinsically hard, requiring the solution to  a maximin optimization problem.
The optimization problem becomes  more tractable when the network satisfies the property of `min-plus
linearity'. We show that some existing estimation techniques can be
accurately characterized if we interpret them as analyzing a network with linear input-output relationships.
The discovery of an implicit assumption of min-plus linearity in existing measurement methods is seemingly at odds with empirical evidence that these methods have been successfully applied in networks that do not satisfy linearity. For example, even a  single FIFO link violates the requirements of min-plus linearity.
We resolve this  apparent contradiction by showing that some networks
can be decomposed into disjoint min-plus linear and
non-linear regions. These networks behave as  a min-plus linear system at low load, and become non-linear  if the load exceeds a certain threshold.
The crossing of the linear and non-linear regions marks the point  where the available bandwidth  can be observed.

The arguments in this paper draw  from  known relationships
between linear system theory and the network calculus. The success
in describing  relatively complex probing schemes using min-plus
algebra and the ability to concatenate the available bandwidths of  multiple links using the min-plus convolution
hints at a possibly stronger link
between bandwidth estimation and network calculus.

The assumptions in this paper on network and traffic  characteristics
are analogous to those in most papers on bandwidth estimation techniques (see Section~\ref{sec-probe}).
The available bandwidth is represented by a  random process, where the source of randomness
is the variability of network traffic. A major assumption is that the time scale of network measurements is
small compared to the time scale at which characteristics of  network traffic or network links change.
This assumption is not justified when properties of a network link vary on short time scales, e.g., on wireless
transmission channels with random noise. Consequently, such networks are not  adequately described in our min-plus system theoretic formulation.

The objective  of this paper is to offer an alternative  interpretation for bandwidth estimation, that potentially enables the development of improved bandwidth estimation schemes.
We previously mentioned that the convolution operator in the min-plus algebra can be exploited to compute bandwidth estimates for end-to-end paths.
Additionally, by generalizing the available bandwidth in terms of service curves  we can express multiple data rates at different time scales. This makes it possible to distinguish a short-term reduction of the data rate due to temporary link congestion from the long-term  utilization of a link or a path.
While we discuss and evaluate implementations of bandwidth probing schemes in measurement experiments on a testbed network,
we emphasize that our objective is a validation of the system theoretic interpretation of these methods, and not an empirical comparison of existing probing schemes.

The remainder of this paper is structured as follows. In
Section~\ref{sec-probe}, we discuss bandwidth estimation methods and
other related work. In Section~\ref{sec-LTI}, we review  the min-plus
linear system interpretation of the deterministic network
calculus.
 In Section~\ref{sec-probetheory}, we formulate bandwidth estimation
 as the solution to an inversion problem in min-plus algebra.
 In Section~\ref{sec-inverse}, we derive solutions to
compute the inversion, and relate them to probing schemes
from the literature.  In
Section~\ref{sec-fifo}, we justify how these probing schemes can be
applied in networks that are not min-plus linear. In Section~\ref{sec-emulab}, we present measurement experiments of probing schemes suggested by the   min-plus system theoretic concepts from this paper. We present brief
conclusions in Section~\ref{sec-concl}.

\section{Available Bandwidth Estimation Techniques}
\label{sec-probe}

The goal of bandwidth estimation is to infer from measurements a
reliable estimate of the unused capacity at a multi-access link,
a single switch, or a network path.
The available bandwidth of a network link $i$ in a time interval
$[t, t+\tau)$ can be specified as \cite{strauss:spruce}
\[
\alpha_i (t, t+\tau) = \frac{1}{\tau} \int_{t}^{t+\tau} C_i (x) - \lambda_i (x) dx \ ,
\]
where $C_i (t)$ and $\lambda_i (t)$ are the capacity
and total traffic, respectively, on link~$i$ at time $t$.
We note that individual definitions of available bandwidth used in the literature may deviate from the above definition. It is generally assumed that  link capacities have a constant rate, i.e., $C_i (x) = C_i$.
Then, the  available bandwidth can be interpreted as a random process, where the randomness stems from the variability of network traffic.

If available bandwidth estimates for single links are available, the available bandwidth of an end-to-end network path with $H$ links
is computed as  \cite{jain:bandwidthestimationpitfalls}
 \begin{equation}
\alpha  (t, t+\tau) = \min_{i = 1, \ldots, H}  \alpha_i (t, t+\tau) \ .
\label{eq:alpha-new}
\end{equation}
The link at which the minimum is attained is often referred to as the {\em tight link}.
Available bandwidth methods measure the transmission of a sequence  of control (probe) packets
and use the measurements to estimate or  bound the available bandwidth. Closely related are probing
schemes that seek to determine the minimum capacity along a path,
referred to as {\em bottleneck capacity} or {\em capacity of the narrow link}.
If the time scale of  measurements is
small compared to the time scale at which characteristics of  network traffic changes, network traffic can be described by a deterministic function or even constant rate function.
In this case, a single sample of the available bandwidth can be interpreted as being  conditioned on the state of the network; Evaluating a large number of samples corresponds to  computing a conditional average.
Under a broad set of assumptions, such as stationarity of the distribution of traffic, the conditional averages are computed correctly.
When  network characteristics change on a short time scale, e.g., a  wireless  channels with random noise, a description of traffic
and link by deterministic functions is not suitable.

Almost all proposed probing schemes perform measurements of packet pairs or packet trains.
Packet pairs consist of two
packets with a defined spacing, and packet trains consist of more
than two packets. Since it was first suggested in
~\cite{jacobson88,keshav91}, packet pair probing has evolved
significantly, and has been used for estimating the bottleneck capacity
(e.g., {\em Bprobe}~\cite{cprobe}, {\em
CapProbe}~\cite{kapoor:capprobe}), the available
bandwidth (e.g., {\em ABwE} \cite{ABwE}, {\em
Spruce}~\cite{strauss:spruce}), and the distribution of
cross traffic~\cite{baccelli07}. The rationale behind these
methods builds on the relation of packet dispersion and available
bandwidth resources, i.e., packet pairs with a defined gap may be
spaced out on slow or loaded links and thus carry information
about the network path. Some techniques, e.g.,
\cite{baccelli07,strauss:spruce} build on a model of a single link whose capacity is assumed to be known.

The majority of proposed methods employ packet trains for
bottleneck capacity estimation (e.g., {\em
PBM}~\cite{paxson:measurements}, {\em Cprobe}~\cite{cprobe}, {\em
pathrate}~\cite{dovrolis:packetdispersion}), and for available
bandwidth estimation (e.g., {\em
pathload}~\cite{jain:pathload,jain:slops}, {\em
pathvar}~\cite{jain:pathvar}, {\em TOPP}~\cite{topp}, {\em
PTR/IGI}~\cite{hu:IGI}, {\em pathchirp}~\cite{ribeiro:pathchirp},
and {\em BFind}~\cite{akella:bfind}). The general approach is to
adaptively vary the rate of probing traffic to induce congestion
in the network. A comprehensive discussion of all techniques is
beyond the scope of this paper. For details and empirical
evaluations of packet train and packet pair methods we
refer to a series of available articles
\cite{jain:bandwidthestimationpitfalls,shriram07,shriam05,sommers,strauss:spruce}.
Some studies have found that packet trains provide more reliable bandwidth estimates than packet pairs \cite{jain:bandwidthestimationpitfalls,loguinov07}.
The wide spectrum of bandwidth estimation methods indicates the
complexity of measuring available bandwidth in a network. In particular, the comparative evaluations of bandwidth estimation
methods sometimes widely disagree in their conclusions on the capabilities and limitations of individual methods.

For the purposes of this paper, the two packet train methods {\em pathload} and {\em pathchirp} are particularly relevant.
{\em Pathload} uses a sequence of constant rate packet trains, where the  transmission rate of consecutive trains is iteratively
varied until it converges to
the available bandwidth.
In {\em pathchirp},  the rate is varied within a single packet train  using  geometrically decreasing inter-packet gaps. Both methods interpret increasing delays as an indication of overload, i.e. to detect if the probing rate exceeds the available bandwidth.

Most estimation techniques  are designed with an assumption that the network as a whole exhibits the behavior of a single link
with constant rate fluid cross traffic. Often it is assumed that the network behaves as
a single FIFO system~\cite{hu:IGI,loguinov04,liu:packetpairdispersion,loguinov07,loguinov08,baccelli07,topp,melander:fcfsprobing,ribeiro:pathchirp,strauss:spruce}. This  is justified by the particular packet dispersion of FIFO systems which
is matched by empirical data~\cite{melander:fcfsprobing}. It has been found that the best
estimates are obtained if the probing traffic increases the load
close to, but not beyond, an overloaded state.

Some probing methods suggest that probing traffic should follow a
Poisson process
\cite{loguinov04,padhye04,loguinov07,paxson:measurements,strauss:spruce,yzhang},
since it can benefit from the PASTA (Poisson Arrivals See Time
Averages) property. Briefly, the PASTA property states that, under
a broad set of assumptions, a Poisson arrival process observes the
average state of the system. An empirical study \cite{BinTariq}
found that Poisson probing does not necessarily lead to improved
estimates of the available bandwidth.
Also, \cite{baccelli06} points out that in case of
non-intrusive probing, Poisson probing is not justified by
default and may even be inferior to other schemes, since it does
not minimize estimation variance nor does it provably reduce
inversion bias, e.g. when deriving quantities of interest such as
available bandwidths from observations.

A set of  analytical studies   \cite{liu:packetpairdispersion,loguinov07,loguinov08}
characterizes the dispersion of  probing traffic  over single hop and multi hop paths in terms of probing-response curves, and extracts the available bandwidth from these curves.
Under the assumption of fluid constant rate cross-traffic probing-response curves feature a sharp bend at the available bandwidth that is used as criterion by some methods, e.g. TOPP \cite{topp}. The mode of operation of many other methods, e.g. the detection of overload by {\em pathload}, can be related to these curves \cite{loguinov07}. Under general bursty cross-traffic the unique turning point of probing-response curves diminishes, whereas it can be recovered under idealized conditions, e.g. using packet trains of infinite length, as shown in \cite{liu:packetpairdispersion,loguinov07,loguinov08}.

An alternative approach to sending  probe packets is to obtain estimates of the available bandwidth through  passive measurements of user traffic. This is the preferred approach in measurement based
admission control (MBAC), which seeks to determine if a
network has sufficient resources to support minimal service
requirements for a traffic flow or aggregate \cite{cetinkaya:egressadmissioncontrol,Jiang05}. In
comparison to passive measurements, probing schemes have an
additional degree of freedom since they can control the
traffic profile of probing packets.

We  note that links between network calculus and bandwidth estimation have been made before  mostly in the context of MBAC
\cite{cetinkaya:egressadmissioncontrol,Jiang05,valaee:adhocadmissioncontrol,Wu03}. Since MBAC studies are set in a context of providing service guarantees, they generally seek to obtain a worst-case description of the available service or traffic, in terms of  time-invariant envelope functions. Worst-case characterizations, even if relaxed to stochastic bounds, tend to be highly conservative. In this paper, we do not use envelopes to describe traffic or service.
For traffic that is transmitted at a lower priority as in
\cite{Jiang05,valaee:adhocadmissioncontrol}, the network calculus  permits a concise description of the  available bandwidth as the leftover capacity which is unused by higher priority traffic.
Aspects of a  min-plus system theoretic interpretation of available bandwidth can be found in  \cite{agharebparast:slopedomain}, which
exploits a known relationship between the Legendre transform of the backlog and the available bandwidth.

\section{Min-Plus Linear System Theory for Networks}
\label{sec-LTI}

This section reviews the linear system representation of networks
and introduces needed concepts and notation.
We consider a continuous-time setting.

\begin{figure}
\centering \epsfig{file=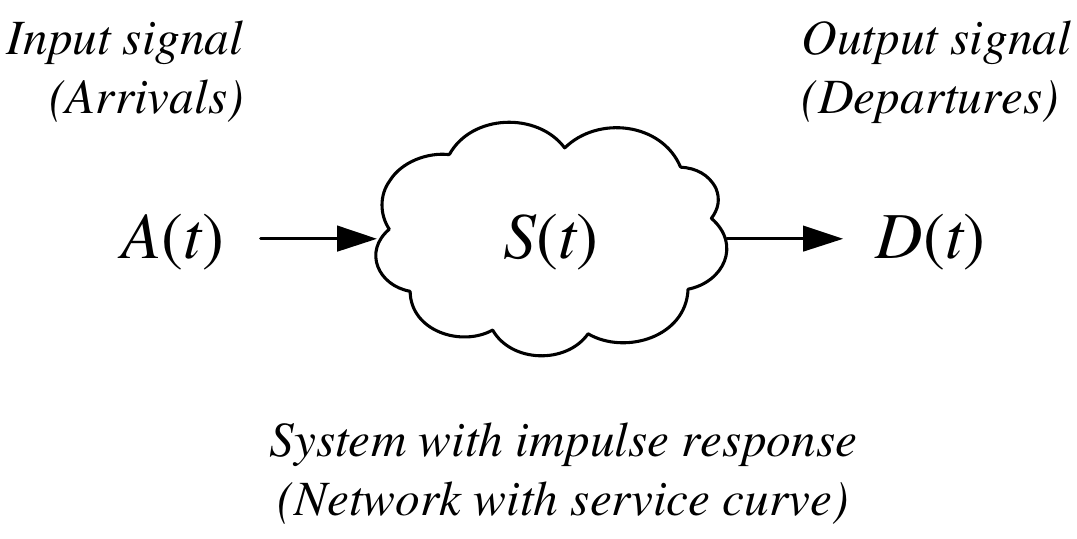, width=0.7 \linewidth}
\caption{Linear time-invariant system and min-plus linear
network.} \label{fig:LTIsystem}
\end{figure}

Classical linear system theory deals with linear time-invariant
(LTI) systems with input signal $A(t)$ and output signal $D(t)$ (see
Fig.~\ref{fig:LTIsystem}). Linear means that for any two pairs of
input and output signals $(A_1, D_1)$ and $(A_2, D_2)$, any linear
combination of input signals $b_1 A_1(t) + b_2 A_2(t)$
results in the linear combination of output signals $b_1
D_1(t) + b_2 D_2(t)$. Time-invariant means that for any pair of
inputs and outputs $(A, D)$, a time-shifted input $A(t-\tau)$
results in a shifted output $D(t-\tau)$.

Let $S(t)$ be the impulse response of the system, that is, the output
signal generated by the system if the input signal is a unity
(Dirac) impulse at time zero. The basic property of an LTI system is
that it is completely characterized by its impulse response, where
the output of the system is expressed as the convolution of the
input signal and the impulse response:
\begin{equation*}
D(t) = \int_{-\infty}^{\infty} A(\tau)  S(t-\tau) d\tau =: A \ast S
(t).
\end{equation*}
\subsection{Min-Plus Algebra in the Network Calculus}
A  significant discovery of networking research from the 1990's is
that networks can often be viewed as linear systems, when the usual
algebra is replaced by a so-called min-plus algebra
\cite{agrawal:flowcontrolprotocols,chang:performanceguarantees,leboudec:networkcalculus}. In a min-plus algebra
\cite{baccelli:synchronizationlinearity},  addition is replaced by a
minimum (we write infimum) and multiplication is replaced by an
addition. Similar to LTI systems, a min-plus linear system is a
system that is linear under the min-plus algebra. This means that a min-plus
linear combination of input functions $\inf \{ b_1 + A_1(t), b_2 +
A_2(t) \}$ results in the corresponding linear combination of output signals
$\inf \{ b_1 + D_1(t), b_2 + D_2(t) \}$.
In min-plus system theory, the burst function
\begin{equation}
\delta(t) =
\begin{cases}
\infty \ ,  & \text{if } t > 0 \ ,\\
0 \ ,  & \text{otherwise} \ ,
\end{cases}
\label{eq:burstfunction}
\end{equation}
takes the place of the Dirac impulse function.

Let $S(t)$ be the impulse response, that is, the output when the
input is the burst function $\delta(t)$. Any time-invariant min-plus
linear system is completely described by its impulse response, and
the output of any min-plus linear system can be expressed as a
linear combination of the input and shifted impulse responses
by
\begin{equation*}
D(t) = \inf_{\tau} \{A(\tau) + S(t-\tau)\}  =: A \ast S (t).
\label{eq:minplusconvolution}
\end{equation*}
In analogy to LTI systems, this operation is referred to as
convolution of the min-plus algebra
\cite{baccelli:synchronizationlinearity}.\footnote{We re-use the
symbol of the operator for notational simplicity. The context makes this slight
abuse of notation non-ambiguous.}
If there exists a
function $S(t)$ such that $D(t) = A \ast S(t)$ for all
pairs $(A, D)$, then it follows that the system is min-plus
linear.

The min-plus convolution shares many properties with the
usual convolution, e.g., it is commutative and associative.
The associativity of min-plus convolution is of particular
importance since it implies an easy way of
concatenating  systems in series. Given a tandem of two min-plus
linear systems $S_1(t)$ and $S_2(t)$, the output can be computed
iteratively as $D(t) = (A \conv S_1) \conv S_2(t)$ and, with
associativity, $D(t) = A \conv (S_1 \conv S_2)(t)$ holds. Generalizing, a tandem of $N$ systems that are characterized by
impulse responses $S_1, S_2, \ldots , S_N$
is equivalent to a single  system with impulse response
\begin{equation}
S(t) = S_1 \conv S_2 \conv \ldots \conv S_N (t) \ .
\label{eq-conv}
\end{equation}

The observation that some networks can be adequately modeled by a
min-plus linear system led to the min-plus formulation of the
network calculus \cite{agrawal:flowcontrolprotocols,chang:performanceguarantees,leboudec:networkcalculus}. Here, a
system is a network element or entire network, input and output
functions $A$ and $D$ are arrivals and departures, respectively, and
the impulse response $S$, called the {\em service curve}, represents
the service guarantee by a network element. Network elements  that
are known to be min-plus linear include  work-conserving constant
rate links ($S(t)=C \, t$, where $C$ is the link capacity), traffic shapers
($S(t)=\sigma + \rho \, t$, where $\sigma$ is a burst size and
$\rho$ is a rate), and rate-latency servers ($S(t)= r \, (t -d)_+$,
where $r$ is a rate, $d$ is a delay, and $(x)_+=\max (x,0)$), and
their concatenations. As in~\cite{agrawal:flowcontrolprotocols,chang:performanceguarantees,leboudec:networkcalculus} we make the convention that functions in the min-plus linear system theory are non-decreasing
non-negative functions that pass through the origin.

The relevance of the network calculus as a tool for the analysis of
networks results from an extension of its formal framework to
networks that do not satisfy the conditions of min-plus linearity.
Non-linear systems implement more complex mappings $\Pi$ of arrival
to departure functions $D(t) = \Pi(A)(t)$. In the network calculus,
these are replaced by linear mappings that provide bounds of the
form $D(t) \ge A \ast \Smin (t)$ or $D(t) \le A \ast \Smax (t)$
(\cite{leboudec:networkcalculus}, pp. {\em xviii}).
Here, $\Smin$
is referred to as a {\em lower service curve} and $\Smax$ is
referred to as an {\em upper service curve}, indicating that they are
bounds on the available service. In a min-plus linear system, the
service  curve $S$ is both an upper and a lower service curve
($S=\Smin=\Smax$), which is therefore frequently referred to as {\em  exact
service curve}.

\subsection{Legendre transform in Min-Plus Linear Systems}
\label{subsec-legendre}
In classical linear system theory, the Fourier transform of $f(t)$, denoted
by $\Four_f(\omega)$, establishes a dual domain, the frequency
domain, for analysis of LTI systems. In the frequency domain, the
Fourier transform turns the  convolution to a multiplication, that
is, $\Four_{f \ast g}(\omega) = \Four_f (\omega) \cdot
\Four_g(\omega)$.

In min-plus linear systems, the {\em Legendre transform}, also
referred to as convex Fenchel conjugate, plays a similar role. The
Legendre transform of a function $f(t)$ is defined as
\begin{equation*}
\Legendre_f (r) = \sup_{\tau} \{r\tau - f(\tau) \}.
\end{equation*}
Since $r$ can be interpreted as a rate, one may view the domain
established by the Legendre transform as a rate domain. The Legendre
transform takes the min-plus convolution to an
addition~\cite{baccelli:synchronizationlinearity,rockafellar:convexanalysis},
that is,
\footnote{Whenever
possible, from now on we use the shorthand notation $f$ to mean
`$f (t)$ for all $t \geq 0$', and $\Legendre_f$ to
mean `$\Legendre_f (r)$ for all $r \geq 0$'.}
\begin{equation}
\Legendre_{f \ast g} = \Legendre_f + \Legendre_g \ .
\label{eq:leg-add}
\end{equation}
Other properties of the Legendre transform that we exploit in this
paper are that, for convex functions $f$, we have
\begin{equation}
\Legendre(\Legendre_f) = f \ .
\label{eq:leg-convex}
\end{equation}
In other words, a convex function $f$ can be recovered from
$\Legendre_f$ by reapplying the Legendre transform
\cite{rockafellar:convexanalysis}. In general, we only have
\begin{equation}
\Legendre(\Legendre_f) \le f  \quad \mbox{ and } \quad
\Legendre(\Legendre_f) = \text{conv}_f \ , \label{eq:leg-nonconvex}
\end{equation}
where $\text{conv}_f$ denotes the convex hull of $f$, defined as the
largest convex function smaller than $f$.

Another property that will be used is that the
Legendre transform  reverses the order of an inequality, i.e.,
\begin{equation}
f \ge g  \Rightarrow \Legendre_f \le \Legendre_g  \ .
\label{eq:leg-rev}
\end{equation}
The  statement is an equivalency when $g$ is convex.
Applications of the Legendre transform in the network calculus
have been previously studied in \cite{agharebparast:slopedomain,fidler:legendre,hisakado:legendre,naudts:trafficparametermeasurement}.

\section{A Min-Plus Algebra Formulation of the Bandwidth Estimation Problem}
\label{sec-probetheory}

We view a network as a  min-plus linear or non-linear
system that converts input
signals (arrivals) into output signals (departures) according to a
fixed but unknown service curve $S$.
The service curve of the network expresses the available bandwidth,
which can be a constant-rate or a more complex function.
Measurements of a
network probe, defined as a sequence of at least two packets, can be
characterized by an arrival function $A^p(t)$ and a departure
function $D^p(t)$, where the functions represent the cumulative
number of bits seen in the interval $[0,t]$ and time~0 denotes the beginning of the probe.
We assume that the system satisfies time-invariance over the duration of a probe. This corresponds to an assumption stated in Section~\ref{sec-probe} that network characteristics do not change over the duration of a measurement.
The arrival and departure
functions of a probe are constructed from timestamps of the transmission and
reception of packets, and from knowledge of the packet size. In
Fig.~\ref{fig-probe} we illustrate a network probe consisting of
five packets of equal size with fixed spacing between
consecutive packets. The vertical distance between
arrivals and departures is defined as the virtual backlog $B^p(t)= A^p(t) - D^p(t)$.
The horizontal distance is defined as the virtual delay $W^p(t)$.

\begin{figure}
\centering \epsfig{file=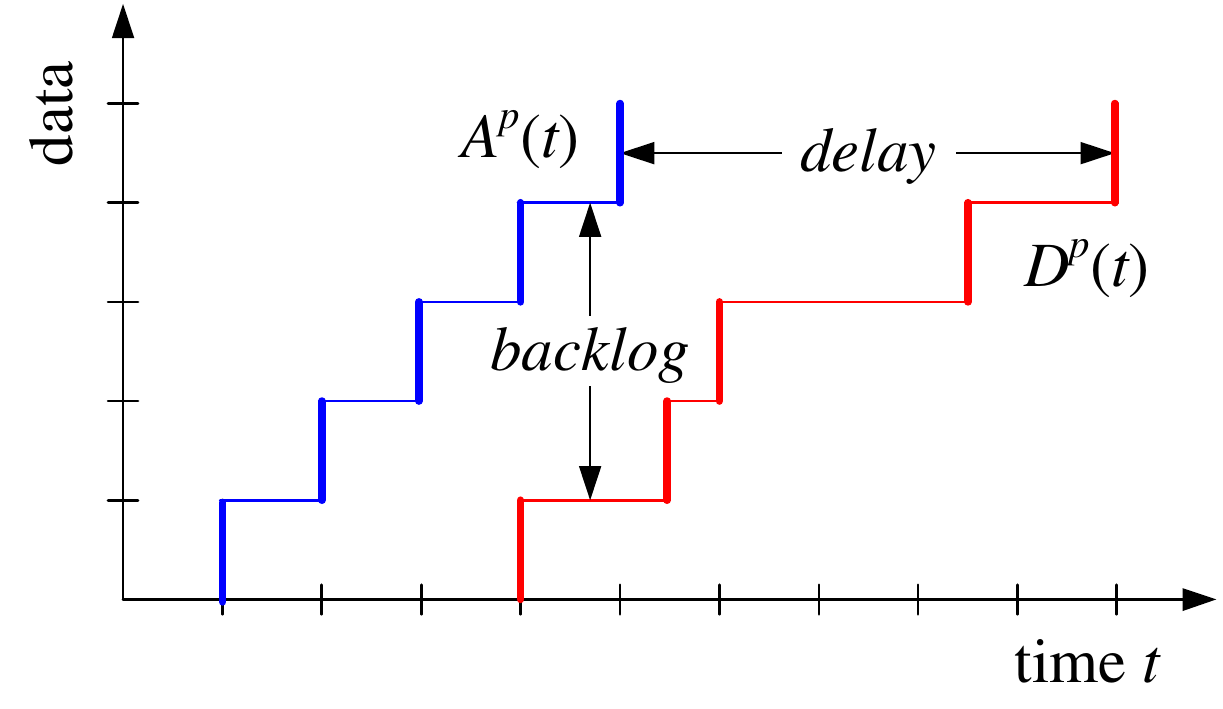, width=0.75 \linewidth}
\caption{Example arrival and departure function of a probe of five
packets.}\label{fig-probe}
\end{figure}

Representing the network by a min-plus linear system, we interpret a probing scheme as trying to determine from a specific sample of
functions $A^p$ and $D^p$ an estimate of an unknown lower service
$\Smin$, such that  $D\geq A \conv \Smin$ holds for {\sl all}
pairs $(A,D)$ of arrival and departure functions.
Ideally, the estimate should
be  a maximal $\Smin (t)$, i.e., there is no other lower service curve larger than $\Smin (t)$ that satisfies the definition.\footnote{We define a partial ordering of functions such that
$f \leq  g$ iff. $f (t) \leq g (t)$ for all $t$.} The goal of a probing scheme  is to select a probing
pattern, i.e., a function $A^p$, that reveals a
maximal service curve.
A maximal lower service curve $\Smin$ computed from $A^p$ and $D^p$ yields a sample of the available bandwidth.

Putting these considerations into a problem formulation, $\Smin$ is the solution to the following optimization problem:
\begin{tabbing}
\= xxxxxxxxxxxxx \=xxxxxxxxxxxx \= \kill
\>{\sc maximize} \> $\Smin$ \\
\>{\sc subject to} \>   $D(t) \geq  \inf_{\tau}
\{A(\tau) + \Smin (t-\tau)\}$, \\
\>\> \hspace{1cm} $\forall t \geq 0, \mbox{ for all pairs } (A, D)$.
\end{tabbing}
This problem has the structure of a maximin optimization, a class of problems which  is fundamentally hard.
The formulation does not consider that service curves only form a partial ordering. Therefore, there may not be an optimal solution, but only solutions  that cannot be further improved.

The bandwidth estimation problem is easier when the network can be
described by a min-plus linear system.
As we will see in Section~\ref{sec-fifo}, some
non-linear networks, such as FIFO systems,
are min-plus linear under low load conditions. Recalling that a system is min-plus linear
if it can be described by an exact service curve,  the bandwidth
estimation problem is reduced to solving the inversion of
$$
D(t) = A \conv S (t) \, \mbox{for all } t \geq 0.
$$

If we can take a measurement of $A^p$  and $D^p$ which solves the
equation for $S$, then, due to min-plus linearity, we have a
solution for {\sl all} possible arrival and departure functions. From
Section~\ref{sec-LTI}, we can infer that a solution is obtained
by using the burst function of Eq.~(\ref{eq:burstfunction}) as
probing pattern, i.e., $A^p (t) = \delta(t)$.
This follows since the service curve is the impulse response of
a min-plus system, that is, $D^p(t) = \delta \conv S (t) = S(t)$.
However, sending a probe as a burst
function is not practical, since it assumes the instantaneous
transmission of an infinite sized packet sequence. While a burst
function can be approximated by a sufficiently large back-to-back
packet train, a high-volume transmission of probes consumes
network resources and interferes with other packet traffic. In
fact, the service curve of a burst function (or its
approximation) may cause some networks that operate in a min-plus
linear regime to become non-linear. The observation that large
packet trains can lead to unreliable estimates has been noted
in the literature~\cite{dovrolis:packetdispersion}.

In the next section, we present derivations for three
 bandwidth estimation methods in min-plus linear systems. We are able
to relate two of these methods to previously proposed probing schemes.
We will later discuss how these schemes can be applied to certain non-linear systems.

We conclude this section with remarks on some general aspects of
probing schemes and their representations in min-plus
linear system theory.

\noindent$\bullet$ {\bf Timestamps and asynchrony of clocks:}
When clocks at the sender and receiver of a probing packet  are
perfectly synchronized, and the sender includes the transmission time
into each probing packet, the receiver can
accurately construct the functions $A^p$ and $D^p$.
In practice, however, clocks are not synchronized.
When clocks have a fixed offset (but no drift), the arrival
function $A^p$ can be viewed as being
time-shifted by an unknown offset $T$. In the min-plus algebra a
time-shift can be expressed by a convolution, i.e.,
 $A^p (t - T) = A^p \conv \delta_{T} (t)$
where $\delta_{T}(t)=\delta(t-T)$. Here, the convolution of arrival
function and service curve becomes $(A^p \conv \delta_T) \conv \Smin$,
which due to associativity and commutativity of the convolution
operation, can be rewritten as $A^p \conv (\Smin \conv \delta_T)$.
Hence, when the offset is fixed but unknown, even an ideal probing
scheme can only compute a service curve that is a time-shifted
version of the actual service curve of the network. Drifting clocks
make the problem harder. Many bandwidth
estimation schemes circumvent the problem of asynchronous clocks by
 returning probes to the sender ~\cite{akella:bfind,cprobe},
or by  only recording time differences of incoming
probes~\cite{hu:IGI,jain:pathload,topp,ribeiro:pathchirp,strauss:spruce}. A moment's
consideration shows that knowledge of the differences between the
transmission and arrival of probing packets has the same
limitations as dealing with an unknown clock offset $T$ between the sender
and receiver of probing packets.

\noindent$\bullet$  {\bf Losses:} Probe packets that are dropped in the network can be thought of as incurring an infinite delay.  The presentation of arrival and departure functions in Fig.~\ref{fig-probe} is not well suited for accommodating packet losses. An alternative presentation, which expresses arrival and departure times of probe packets (on the y-axis) as a function of the sequence numbers (on the x-axis) can deal with packet losses more elegantly, but may appear less intuitive. Such a description of traffic with flipped axes leads to a dual representation of the network calculus which is based on a max-plus algebra \cite{chang:performanceguarantees,leboudec:networkcalculus}.

\noindent$\bullet$  {\bf Packet pairs:} The arrival and departure
functions of a packet pair have each only three points, i.e., the
origin and the two timestamps related to the packet pair. If it can
be assumed that the service curve has a certain shape, e.g., a
rate-latency curve $S(t) =r\cdot(t-d)_+$, the service curve can be
recovered. In the absence of such an assumption, packet pair methods
may not be able to recover more complex service curves.
This is reflected in observations that bandwidth estimates from packet pairs tend to be less reliable compared to packet trains if cross-traffic is bursty \cite{jain:bandwidthestimationpitfalls,loguinov07}.

\section{Min-Plus Theory of Network Probing Methods}
\label{sec-inverse}

In this section, we derive bandwidth estimation methods as solutions
to finding an unknown service curve for a min-plus system.
For the derivations, we make a number of idealizing assumptions. First, we consider a fluid
flow view of traffic and service. This assumption can be
relaxed at the cost of additional notation. Unless stated otherwise,
we assume that the network represents a min-plus linear system. This
assumption will be  relaxed in Section~\ref{sec-fifo}. We generally assume
that accurate timestamps for transmission and arrival of probes are
feasible. If  measurements only record time differences
between events or include an unknown clock offset between sender
and receiver, the computed service curves need
to be time shifted by some constant value.

\subsection{Passive Measurements}

We  first try to  answer the question: {\em How much
information about the available bandwidth can be extracted from
passive measurements of traffic?} To provide an answer we first
introduce the deconvolution operator of the min-plus algebra, which
is defined for two functions $f$ and $g$ by
\begin{equation*}
f \deconv g (t) = \sup_{\tau} \{f(t+\tau) - g(\tau)\}.
\end{equation*}
The deconvolution operation is {\em not} an inverse to the
convolution ($g \neq f \deconv (f \conv g)$), however, it
has aspects of such an inverse. This is expressed in the following
duality statement from \cite{leboudec:networkcalculus}, which states
that for functions $f$, $g$ and $h$, the following equivalency
holds:\footnote{We use
shorthand notation $f = g \conv h $ to mean `$f (t) = (g \conv h) (t)$
for all $t \geq 0$'.}
\begin{equation}
\begin{split}
f \le g \conv h \quad \Leftrightarrow \quad
h \ge f \deconv g.
\end{split}
\label{eq:duality}
\end{equation}
We will exploit this property to formulate the  following lemma.

\begin{lemma}
\label{prop:composition}
For  two functions $g$ and $h$,
we have

 $$((h \conv g) \deconv g) \conv g = h
\conv g \ . $$
 \end{lemma}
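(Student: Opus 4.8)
The plan is to prove the claimed identity by establishing the two inequalities $h \conv g \le ((h \conv g) \deconv g) \conv g$ and $((h \conv g) \deconv g) \conv g \le h \conv g$ separately, with $\le$ understood in the partial ordering on functions. Both will follow directly from the duality in Eq.~(\ref{eq:duality}), together with commutativity and monotonicity of the min-plus convolution, so that no direct manipulation of the defining $\inf$ and $\sup$ should be needed. It is convenient to note that Eq.~(\ref{eq:duality}) holds for \emph{arbitrary} functions in its three slots, so I may instantiate those slots freely while keeping the deconvolving and convolving function equal to $g$.

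For the first inequality I would instantiate Eq.~(\ref{eq:duality}) so that its three functions are $h \conv g$, $g$, and $(h \conv g) \deconv g$, in the roles of $f$, $g$, and $h$ respectively. Its right-hand condition then reads $(h \conv g) \deconv g \ge (h \conv g) \deconv g$, which is trivially true, so the $\Leftarrow$ direction yields $h \conv g \le g \conv ((h \conv g) \deconv g)$; commutativity rewrites the right-hand side as $((h \conv g) \deconv g) \conv g$, giving the first inequality.

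For the second inequality the key intermediate claim is $(h \conv g) \deconv g \le h$. I would again use Eq.~(\ref{eq:duality}), now reading it from left to right with its three functions taken to be $h \conv g$, $g$, and $h$: the premise $h \conv g \le g \conv h$ holds trivially since $h \conv g = g \conv h$ by commutativity, so the conclusion $h \ge (h \conv g) \deconv g$ follows at once. Convolving both sides of $(h \conv g) \deconv g \le h$ with $g$ and using monotonicity of the convolution---immediate from its defining infimum---then gives $((h \conv g) \deconv g) \conv g \le h \conv g$. Combining the two inequalities proves the identity.

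The calculation is short, so the only real difficulty is conceptual: recognizing that this statement is precisely the triangle identity of the adjunction between the maps $\cdot \conv g$ and $\cdot \deconv g$ expressed by Eq.~(\ref{eq:duality}), and hence that both directions descend from that single duality rather than from separate ad hoc computations. The point to watch is orienting the duality correctly in each of its two uses---which function occupies each slot---since swapping the roles of $f$ and $h$ would flip the inequality; checking that the triggering hypothesis holds (with equality) in each instantiation is what keeps the argument clean.
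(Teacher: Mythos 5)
Your proof is correct and follows essentially the same route as the paper's: both derive $h \conv g \le ((h \conv g)\deconv g)\conv g$ from the $\Leftarrow$ direction of Eq.~(\ref{eq:duality}) applied to the trivially true inequality $(h\conv g)\deconv g \ge (h\conv g)\deconv g$, and both obtain the reverse inequality from the $\Rightarrow$ direction (yielding $(h\conv g)\deconv g \le h$) followed by monotonicity of the convolution. The only difference is presentational --- the paper packages the two steps via the auxiliary definitions $f = g\conv h$ and $\tilde h = f \deconv g$ rather than stating the two inequalities explicitly.
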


\begin{proof}
The proof makes two applications of Eq.~(\ref{eq:duality}).
Let us define $\tilde{h} = f \deconv g$ and $f = g\conv h$.
By definition of $\tilde{h}$ we can conclude with Eq.~(\ref{eq:duality}) that  $f \le g \conv
\tilde{h}$.

By definition of $f$, we see from
Eq.~(\ref{eq:duality}) that $h\geq f \deconv g$.
By our definition of $\tilde{h}$, this gives us $h \geq \tilde{h}$. From $h \geq
\tilde{h}$ and $f = g \conv h$ we get  $f \ge g \conv \tilde{h}$.

Combining the two statements about the relationship of
$f$ and $g \conv \tilde{h}$ gives us $f = g \conv \tilde{h}$. Now,
by inserting our definition $\tilde{h} = f \deconv g$,
we obtain $f = g \conv (f \deconv g)$. Inserting our second
definition  $f=g \conv h$ yields $g \conv h  = g \conv ((g \conv h)
\deconv g)$. Reordering the expression using  commutativity of the
min-plus convolution completes the proof.
\end{proof}

The lemma justifies the following passive
measurement scheme. Let us denote the arrival and departure
functions measured from a traffic trace of one or more flows  by
$A^{p}$ and $D^{p}$. By assumption of linearity, we know that
$D^{p} = A^{p} \conv S$ holds, but the shape of $S$  is unknown.
Suppose  we compute a function $\tilde{S}$ from the trace as the
deconvolution of the departures and the arrivals, i.e., we set
\begin{equation}
\tilde{S} = D^{p} \deconv A^{p} \ .
\label{eq:deconv-x0}
\end{equation}
With this, we can derive as follows:
\begin{eqnarray*}
D^p & = & S \conv A^p \\
 & = & ((S \conv A^p) \deconv A^p) \conv A^p \\
 & = & (D^p \deconv A^p) \conv A^p \\
 & = & \tilde{S} \conv A^p
\end{eqnarray*}
Equality in the first line holds because of our assumption of linearity. In the second line we apply Lemma~\ref{prop:composition}. The third line uses again the linearity assumption.
In the fourth line, we insert Eq.~(\ref{eq:deconv-x0}).
We can therefore conclude with Lemma~\ref{prop:composition} that
\begin{equation}
D^{p} = A^{p} \conv \tilde{S} \ . \label{eq:deconv-x1}
\end{equation}
Applying the duality property from Eq.~(\ref{eq:duality}) to
$D^p = A^p \conv S$, we obtain $S \geq D^p \deconv A^p$. Then,
with Eq.~(\ref{eq:deconv-x0}) we have
$$
\tilde{S} \leq S \ .
$$
Hence, by deconvolving $D^p$ and $A^p$ as in Eq.~(\ref{eq:deconv-x0}), the result $\tilde{S}$ is a lower service curve, i.e.,  for all
pairs of arrival and departure functions $(A, D)$,
we have $D \geq A \conv \tilde{S}$. Since, from
Eq.~(\ref{eq:deconv-x1}),  $\tilde{S}$ can  completely reconstruct
the departure function from the arrival function, we can conclude
that  $\tilde{S}$ is the best possible estimate of the actual
service curve that can be justified from measurements of $A^{p}$ and
$D^{p}$, in the sense that it extracts the most information from the
measurements.
Since the above deconvolution computes
the largest available bandwidth that can be
justified from a given traffic trace, the described method will perform no
worse than any existing MBAC method from the MBAC literature
\cite{cetinkaya:egressadmissioncontrol}.

The main drawback of this method is that it can only be applied to
linear networks. For networks that do not satisfy min-plus
linearity, i.e., that can only be described by a lower service curve
($D \geq A \conv \underline{S}$) or upper service curve
($D \leq A \conv \overline{S}$),  $\tilde{S}$  only computes a (not
useful) lower bound for an upper service curve $\overline{S}$.
As another remark, note that Lemma~\ref{prop:composition} does not
help us with designing a probing scheme,
since it does not tell us how to select the traffic $A^{p}$ for the
network probes.

For illustration of the passive measurement scheme, we now present two numerical examples.

{\bf Example 1: Sensitivity of Passive Measurements. }
We study the the sensitivity of the passive measurement method with
respect to the burstiness of the trace, the fraction of available
bandwidth that is utilized by the flows, and the length of the
measurement period.
We consider an  idealized fluid flow traffic at a min-plus linear system, which is
governed by a service curve
\[ S(t) = (b+rt) \ast (R[t-T]^+) \ .
\]

The system represents a
network where the input is regulated with a leaky-bucket with
parameters $b$ and $r$, and the service is described by a
latency-rate service curve with delay $T$ and rate $R$. We set
$b=0.75$~Mb, $r=25$~Mbps, $R=100$~Mbps, and $T=10$~ms.

\begin{table}[t]
\centering
\caption{Example 1: Parameters of On-Off sources.}
\label{tab:onoff}

\begin{tabular}{|c||r|r|r|}
 \multicolumn{4}{c}{\small (a) \sc high load}  \\
\hline  Burstiness  & high & med & low \\
\hline Number of sources & 1 & 5 & 25  \\
Source peak rate [Mbps] & 200 & 40 & 8  \\
Total average rate [Mbps] &  20 & 20 & 20  \\
\hline
\end{tabular}

\begin{tabular}{|c||r|r|r|}
  \multicolumn{4}{c}{\small (b) \sc  low load} \\
\hline  Burstiness  & high & med & low \\
\hline Number of sources &  1 & 5 & 25 \\
Source peak rate [Mbps]  & 200 & 40 & 8 \\
Total average rate [Mbps] &   10 & 10 & 10 \\
\hline
\end{tabular}

\end{table}

As traffic trace, we use an arrival sample path that
represents the aggregate arrivals from a set of statistically
independent On-Off traffic sources. In the {\em On state}, each source
generates traffic at  a given peak rate. In the {\em Off state},
no data is generated. In each time slot of duration one  millisecond, a source switches from the
{\em On state} to the {\em Off state} with probability $p$, and from the {\em Off state} to the {\em On state} with
with probability $q$.

The parameters are depicted in Table~\ref{tab:onoff}. In the {\em
high load} setting,
we set $p=0.09$ and $q= 0.01$, resulting in a total arrival rate
of $20$~Mbps. In {\em low load},
we set $p=0.19$ and $q= 0.01$, which leads to an
average total traffic rate of $10$~Mbps.
We control the burstiness of
the traffic by increasing the number of flows, and accordingly
decrease the peak rate of each flow. Due to statistical
multiplexing, an aggregate of multiple On-Off sources is less bursty
than a single flow with the same peak and average rate. In our plots
burstiness levels of {\em high}, {\em medium}, and {\em low}
correspond to a trace with  1,~5, and 25~sources.

In Fig.~\ref{fig:onoffa}-\ref{fig:onoffd} we show the estimates of the
lower service curves $\tilde{S}$  obtained with the deconvolution described above,
and compare them to the actual service curve $S$, indicated as a
thick (red) line in each graph. The length of the measurement is
taken  over  1~second (plots on the left), 10~seconds (plots on the
right). In all plots, we see that burstier traffic leads to better
 estimates of the service curve.  This is expected since we know that the
burstiest traffic, i.e., a burst impulse, can perfectly recover $S$
(see Section~\ref{sec-probetheory}). For the same reason, the
estimates improve when the traffic trace has a higher utilization of
the available bandwidth. Observe that all estimates improve with
increasing length of the evaluation period. This follows from the
definition of the supremum in the min-plus deconvolution operation.

\begin{figure}

\centering
\subfigure[High load, after 1
second.]{\epsfig{file=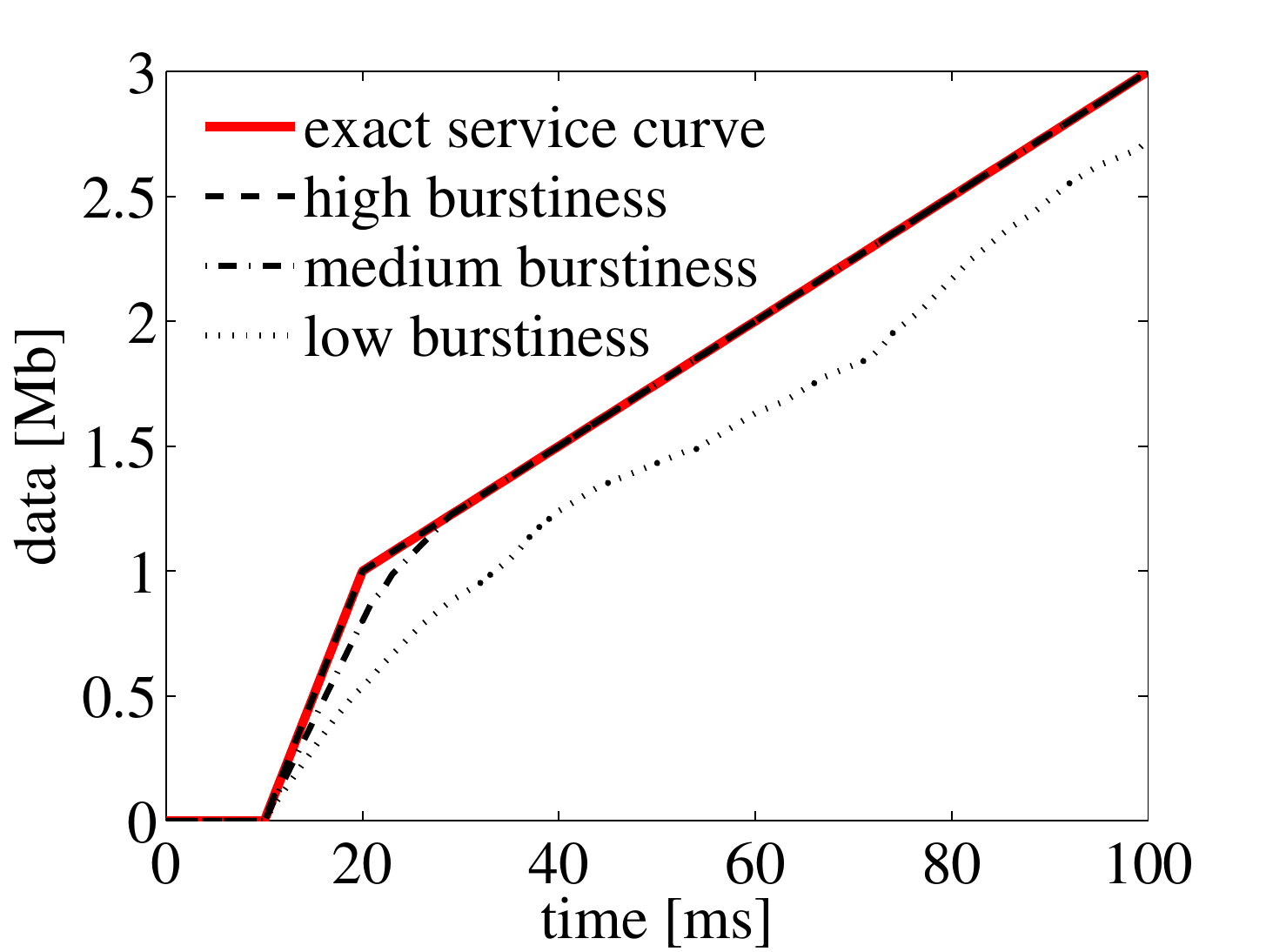, width=0.49\linewidth}\label{fig:onoffa}}
\subfigure[High load, after 10
seconds.]{\epsfig{file=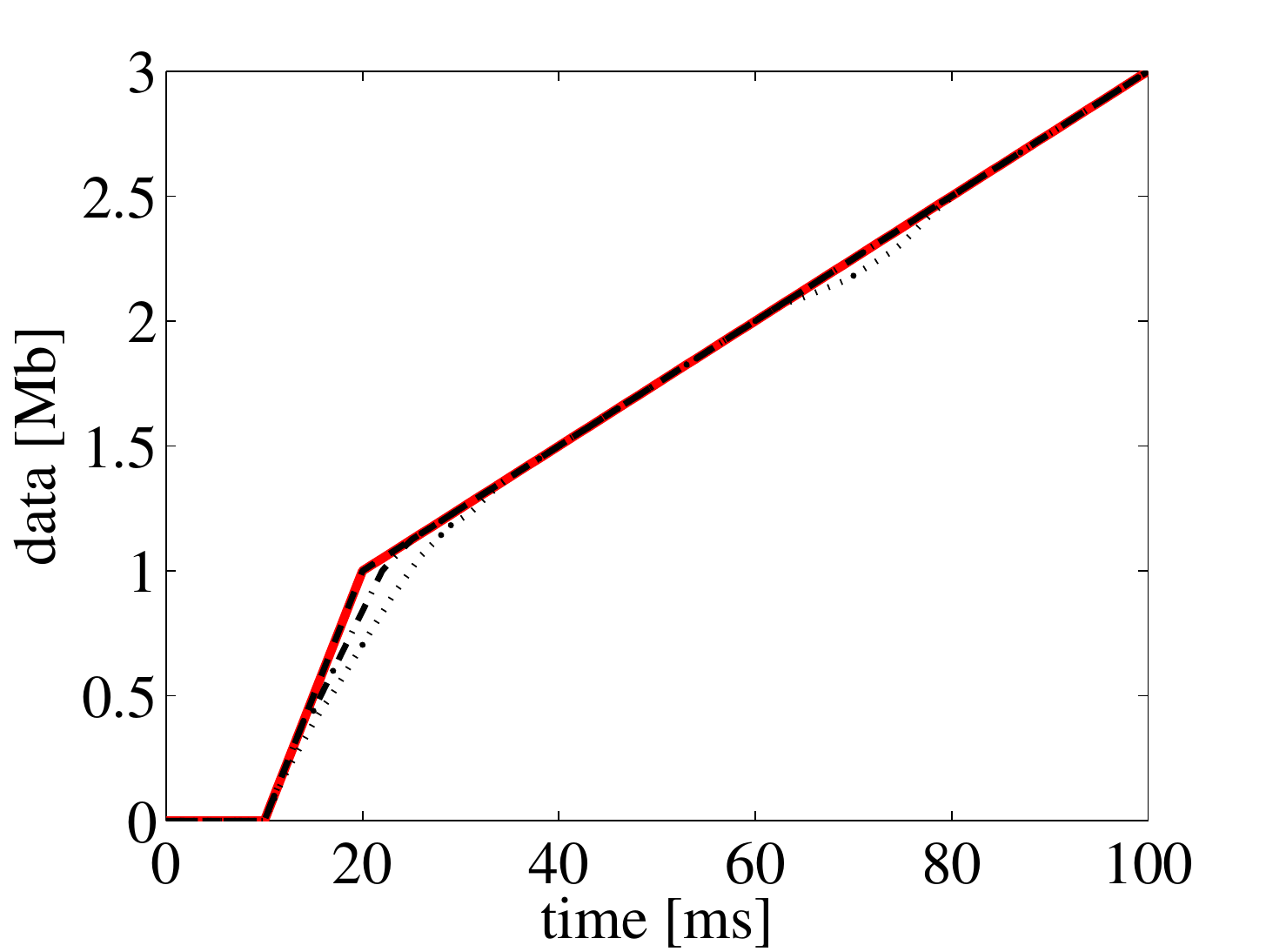, width=0.49\linewidth}\label{fig:onoffb}}
\subfigure[Low load, after 1
second.]{\epsfig{file=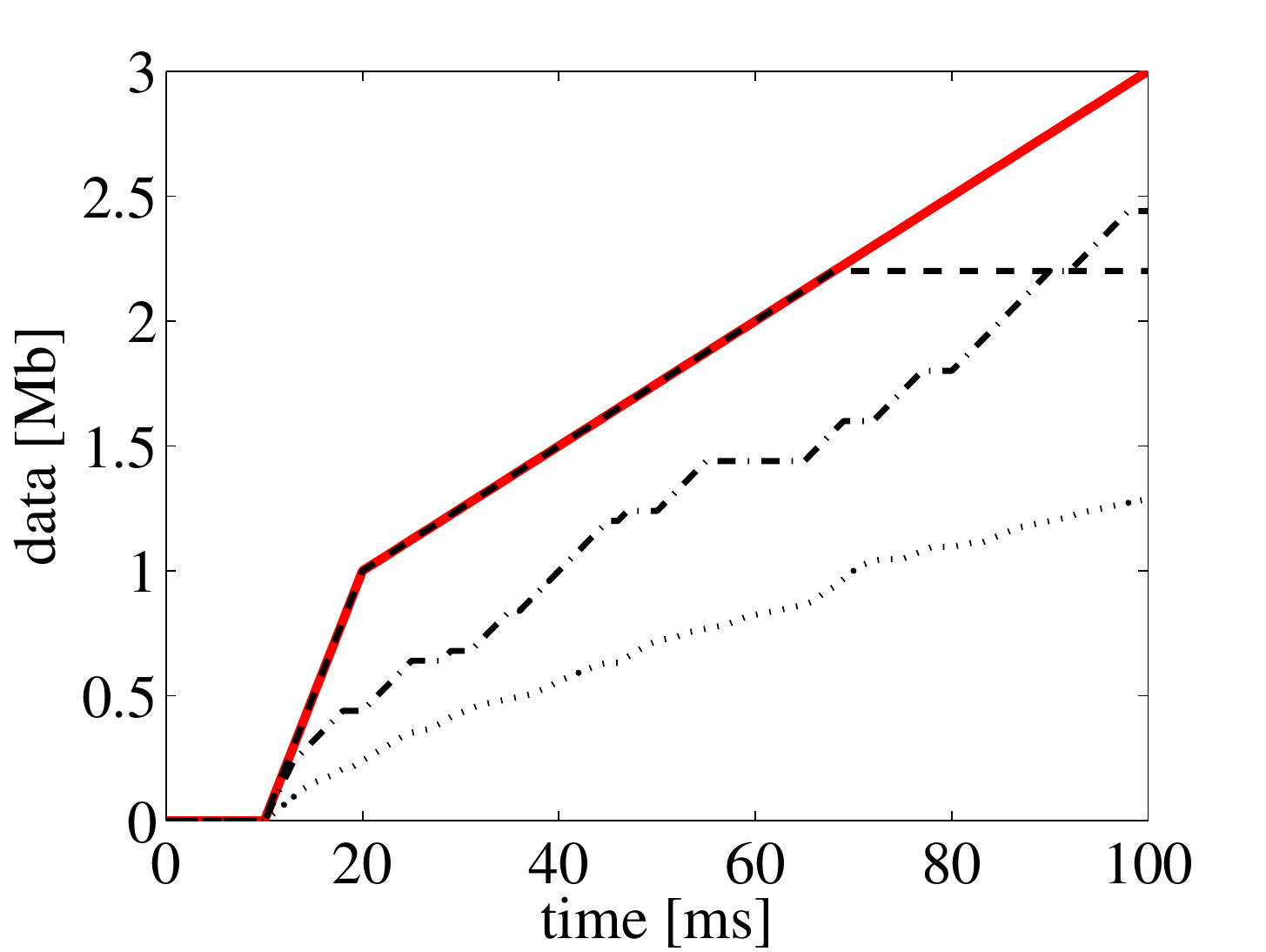, width=0.49\linewidth}\label{fig:onoffc}}
\subfigure[Low load, after 10
seconds.]{\epsfig{file=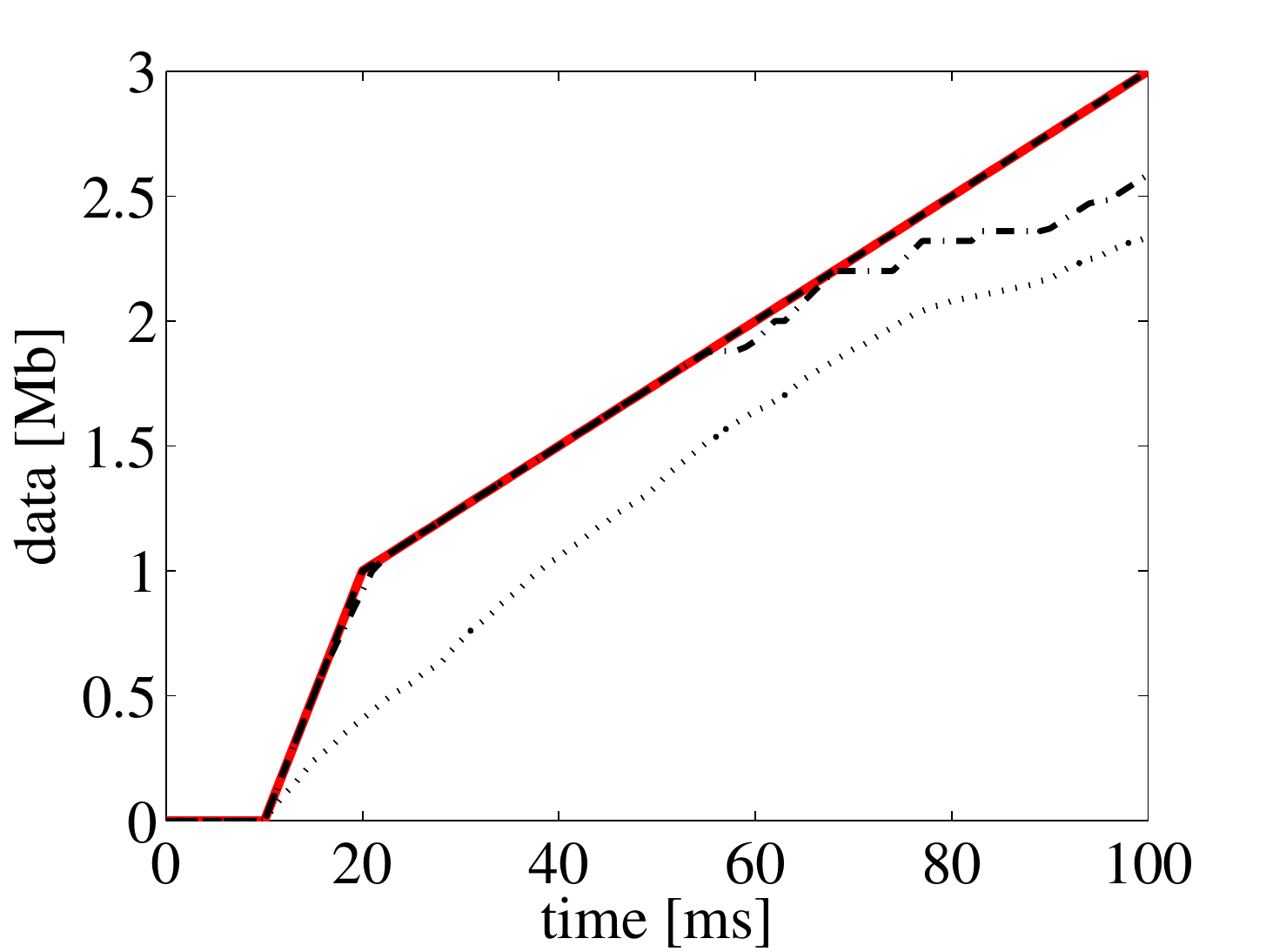, width=0.49\linewidth}\label{fig:onoffd}}

\caption{Example 1: Passive measurement of multiplexed On-Off traffic.\label{fig:video}}
\end{figure}

{\bf Example 2: The Dilemma of Passive Measurements. }
To illustrate the limitations of passive measurements for
bandwidth estimation, we now present as a second example an ns-2 simulation \cite{ns2} of
measurements at a single node with capacity~$C$. There is a  propagation delay of 10~ms at the ingress link and a 10~ms delay at the egress link. The packet scheduling algorithm is
either FIFO or Deficit Round Robin (DRR). DRR approximates a fair
queuing discipline, which can distribute capacity equally among cross and probe traffic. The cross traffic at this link
consists of CBR traffic which is transmitted in 800 byte packets.
The rate of cross traffic is set to half the link capacity.
The traffic source for passive measurements is a small
segment of a high-bandwidth variable bit rate video trace
\cite{Reissleintrace} with an average rate of 17.1~Mbps and a peak
rate of 154~Mbps. (We have used  two seconds of the video
trace entitled {\em From Mars to China}.) We evaluate the bandwidth estimation, when the link capacity is set to
$C=70,~50,~30$~Mbps. The resulting service curves
are shown in Fig.~\ref{fig:video}. In each
figure, the exact service curve (red line) is  a latency rate service curve with delay~20~msec and
rate $C/2$.  The computed estimates are indicated by a dashed  line for FIFO
and a solid line for DRR scheduling. For $C=70$~Mbps, the
available bandwidth is clearly underestimated. The estimates
improve for $C=50$~Mbps, where the video trace accounts for a
larger fraction of the unused bandwidth. For $C=30$~Mbps, the
available bandwidth is estimated with high accuracy for the DRR
link, but overestimated for the FIFO link. The overly optimistic  estimates at a  FIFO link occur when the variable bit rate of the video traffic overloads the link,
thereby preempting cross traffic. An explanation for this outcome
is given in Section~\ref{sec-fifo}, where we discuss non-linearities observed in overloaded FIFO systems.  The video trace example indicates a fundamental dilemma
with passive measurements. On the one hand, if the traffic
intensity of the measured trace is too low, the trace does not
extract enough information from the network. On the other hand,
if the traffic intensity is too high, the traffic trace may
preempt other traffic, thus leading to inaccurate estimates.

\begin{figure}
\centering
\subfigure[$C=70$~Mbps.]{\epsfig{file=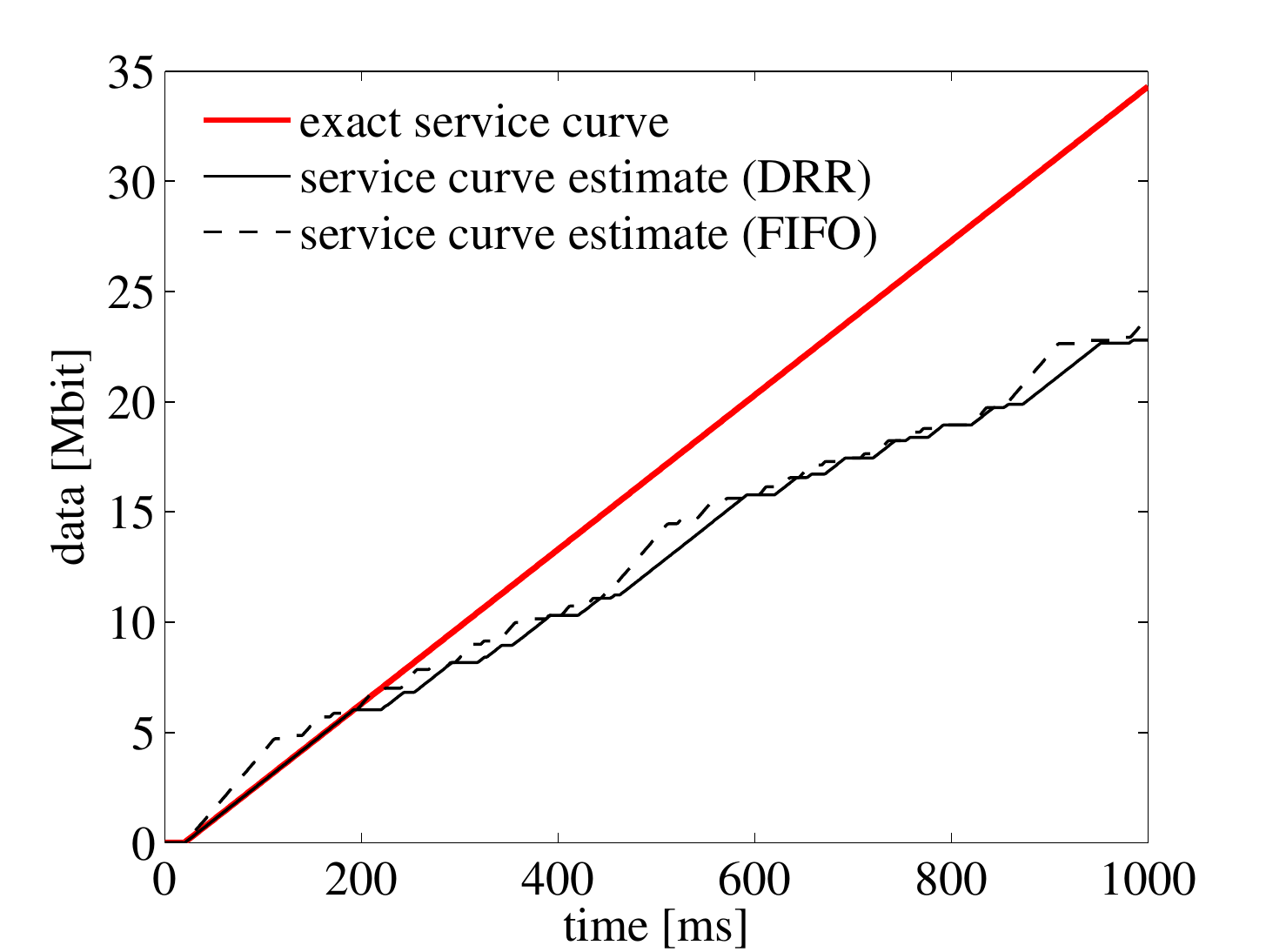,
width=0.49 \linewidth}\label{fig:videoa}}
\subfigure[$C=50$~Mbps.]{\epsfig{file=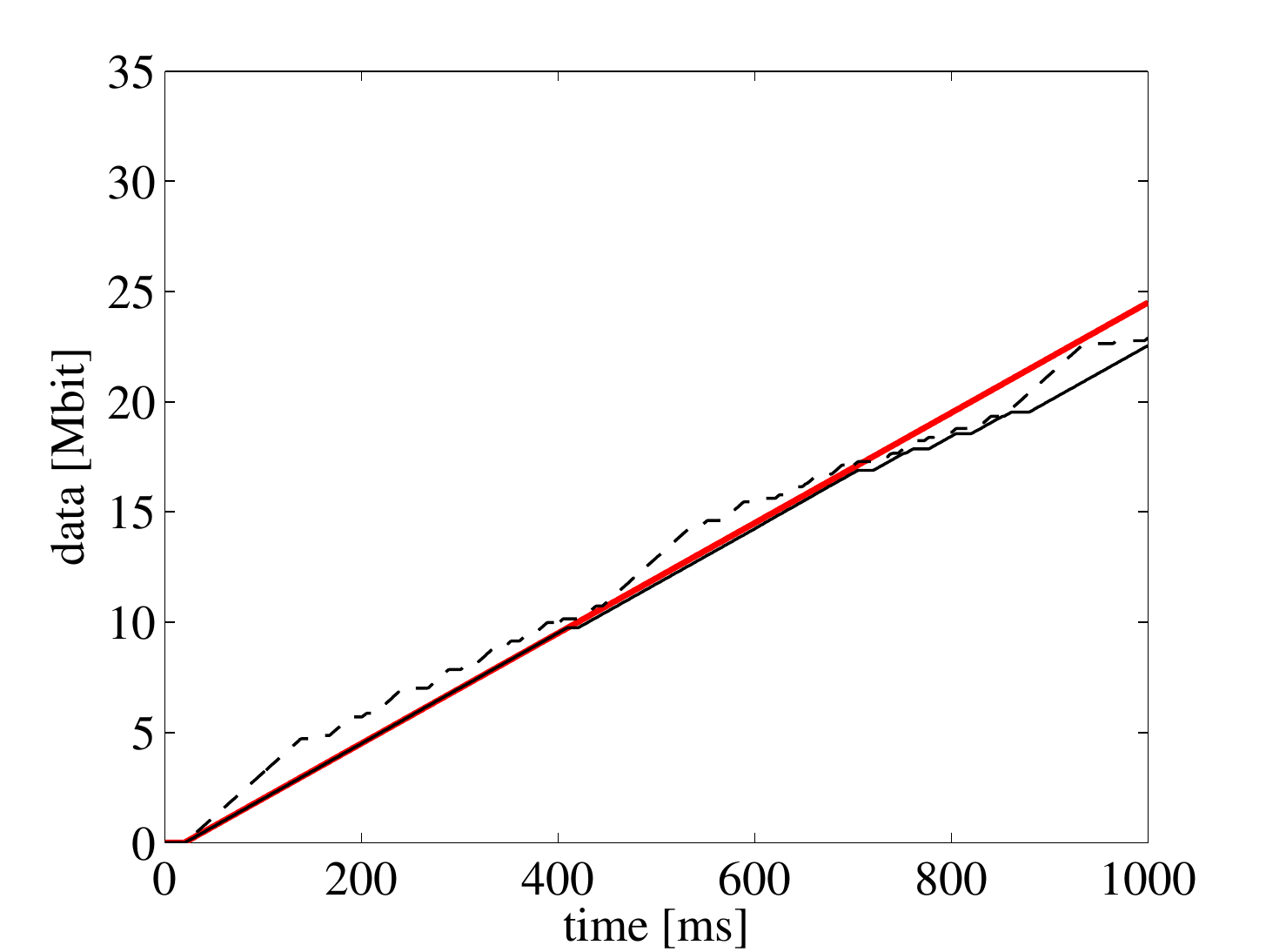,width=0.49
\linewidth}\label{fig:videob}}
\subfigure[$C=30$~Mbps.]{\epsfig{file=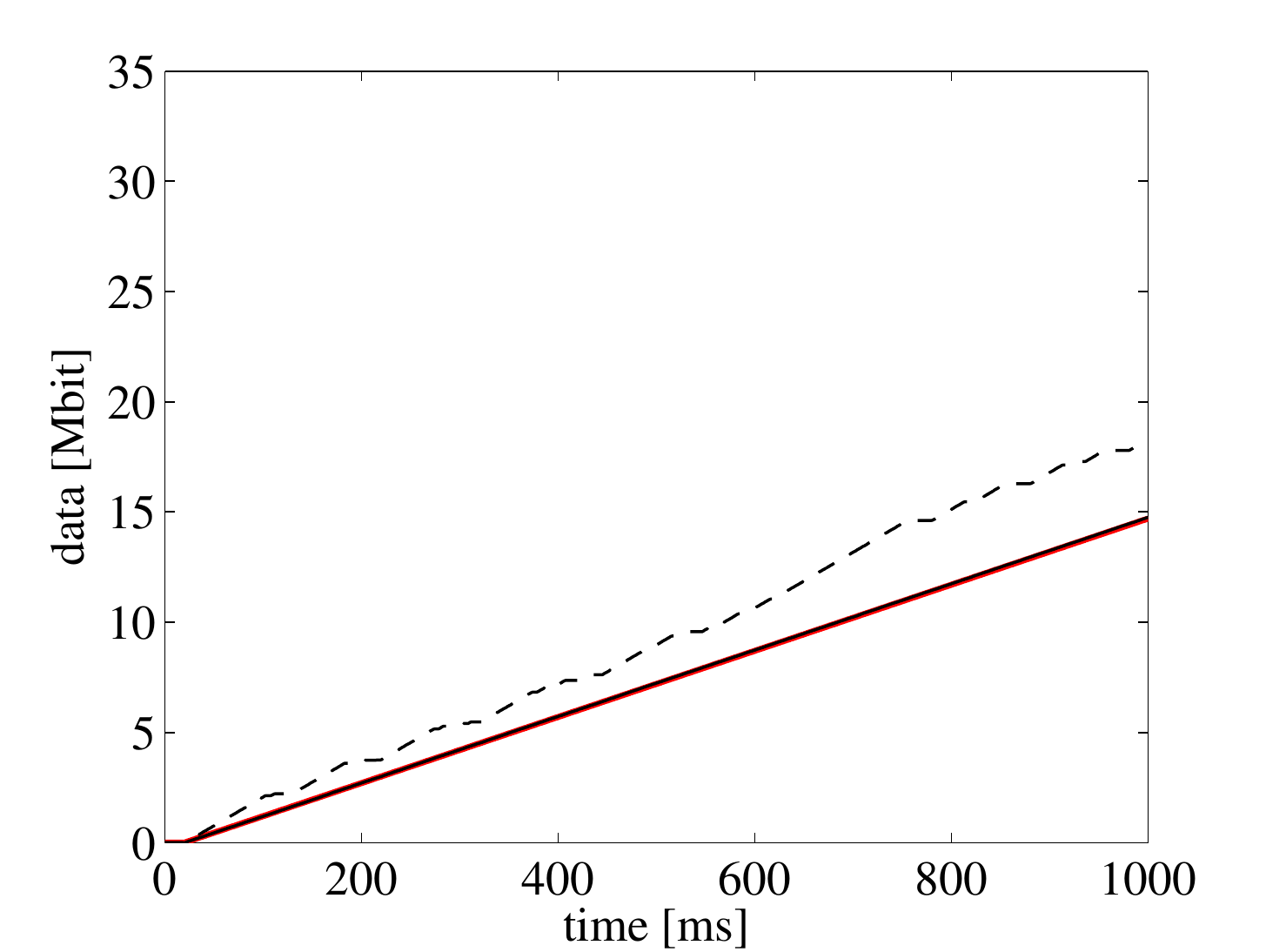,width=0.49 \linewidth}\label{fig:videoc}}
\caption{Example 2: Passive measurement simulation with a video source.}
\end{figure}

\subsection{Rate Scanning}
\label{subsec:ratescan}
We now consider an active probing scheme that transmits packet trains at a constant rate, but varies the rate of subsequent trains,  e.g., such as
{\em pathload}~\cite{jain:slops,jain:pathload}.
We  provide a justification for this approach, which we refer to as {\em rate scanning}, using min-plus system theory.

Given  arrival and departure functions $A$ and $D$, using the
earlier definition of backlog, the maximum backlog
can be computed as
\begin{equation*}
B_{max} = \sup_{t} \{A(t) - D(t) \}.
\end{equation*}
If the arrivals are a constant rate  function, that is, $A(t) = r t, $
and the network satisfies min-plus linearity,
we can write $B_{max}$ as a function of $r$ as follows:
\begin{equation*}
\begin{split}
B_{max}(r) &= \sup_{t} \{rt - \inf_{\tau} \{r \tau + S(t-\tau) \}\} \\
&= \sup_{t} \{ \sup_{\tau} \{r (t -\tau) - S(t-\tau) \}\} \\
&= \sup_{t} \{rt - S(t) \}.
\end{split}
\end{equation*}
The first line uses that  output in min-plus linear systems can be
characterized by  $D= A \conv S$. The second line moves the infimum in
front of the substraction, where it
becomes a supremum. The third line is simply a substitution.

Recalling the definition of the Legendre transform from
Subsection~\ref{subsec-legendre}, the right hand side
of the last equation can be written as the Legendre transform of $S$, that is,
$B_{max} (r) = \Legendre_S (r)$. This relation has been observed
in~\cite{Cruz91,fidler:legendre,naudts:trafficparametermeasurement}.
We now take a further step by applying the relation in the reverse transform.
Due to Eq.~(\ref{eq:leg-convex}), we have for convex service curves
$S$ that
\begin{equation*}
S(t) = \Legendre ( \Legendre_S ) (t) = \Legendre_{B_{max}} (t) = \sup_r \{rt
- B_{max}(r) \} \ .
\end{equation*}
Thus, every convex
service curve can be completely recovered by measurements of the maximum
backlog $B_{max}$.  For service curves that are not convex
one recovers, using Eq.~(\ref{eq:leg-nonconvex}), a lower bound
for the service curve. The interpretation of rate scanning is
that each constant bit rate stream with rate $r$ reveals one point
$B_{max}(r)$ of the service curve in the Legendre domain
$\Legendre_S (r)$.
If we specify a {\em rate
increment}, which sets the rate increase between packet
trains and   a {\em  rate limit}, which sets the maximum rate at
which the network is scanned, we realize a rate scanning method
that computes a service curve consisting of piecewise linear segments. The
choice of the rate increment determines the length of the segments,
and, in this way, the accuracy of the computed service
curve. We note that
rate scanning is capable of tracking a convex service
curve up to a time where the derivative of the service
curve reaches the rate limit. The higher the maximum rate, the more
information about the service curve is recovered. The number of packets in a packet train must be large  enough so that the maximum backlog can be  accurately measured.

A criterion for
picking the rate limit suggested by our derivations is to stop
rate scanning when increasing the scanning rate does not yield an
improvement of the service curve. This criterion, however, may fail when the
underlying network is not min-plus linear.
The rate scanning method {\em pathload} ~\cite{jain:slops,jain:pathload} uses an iterative procedure which
varies the rate $r$ of consecutive packet trains until measured
delays indicate an increasing trend. In
Section~\ref{sec-fifo} we will find that similar criteria can
be justified to determine a rate limit in  a non-linear system.

In Fig.~\ref{fig:ratescanninga} we present an example of the rate
scanning approach for a fluid-flow service curve with a quadratic
form $S(t) = 0.4 t^2$. In the example, rate scanning  is performed
at rates $10,~20, \ldots , 80$~Mbps. In Fig.~\ref{fig:ratescanninga},
we plot the maximum backlog observed for each scanning rate. The
function $B_{max}(r)$ is constructed by connecting the measured data
points by lines. For rates $r$ exceeding the rate limit we can set  $B_{max} (r) = \infty$ to obtain a conservative
 Legendre transform for all rate values.
In Fig.~\ref{fig:ratescanningb}, we show the
service curves that are obtained with different rate limits. The
higher the rate limit, the more accurate the results.
Decreasing the increment of the rate will improve the accuracy of
the service curve.
We point out that  both the backlog plot in
Fig.~\ref{fig:ratescanninga} and the service curves in
Fig.~\ref{fig:ratescanningb} consist of linear segments.

\begin{figure}
\centering
\subfigure[Maximum Backlog $B_{max} (r)$.
]{\epsfig{file=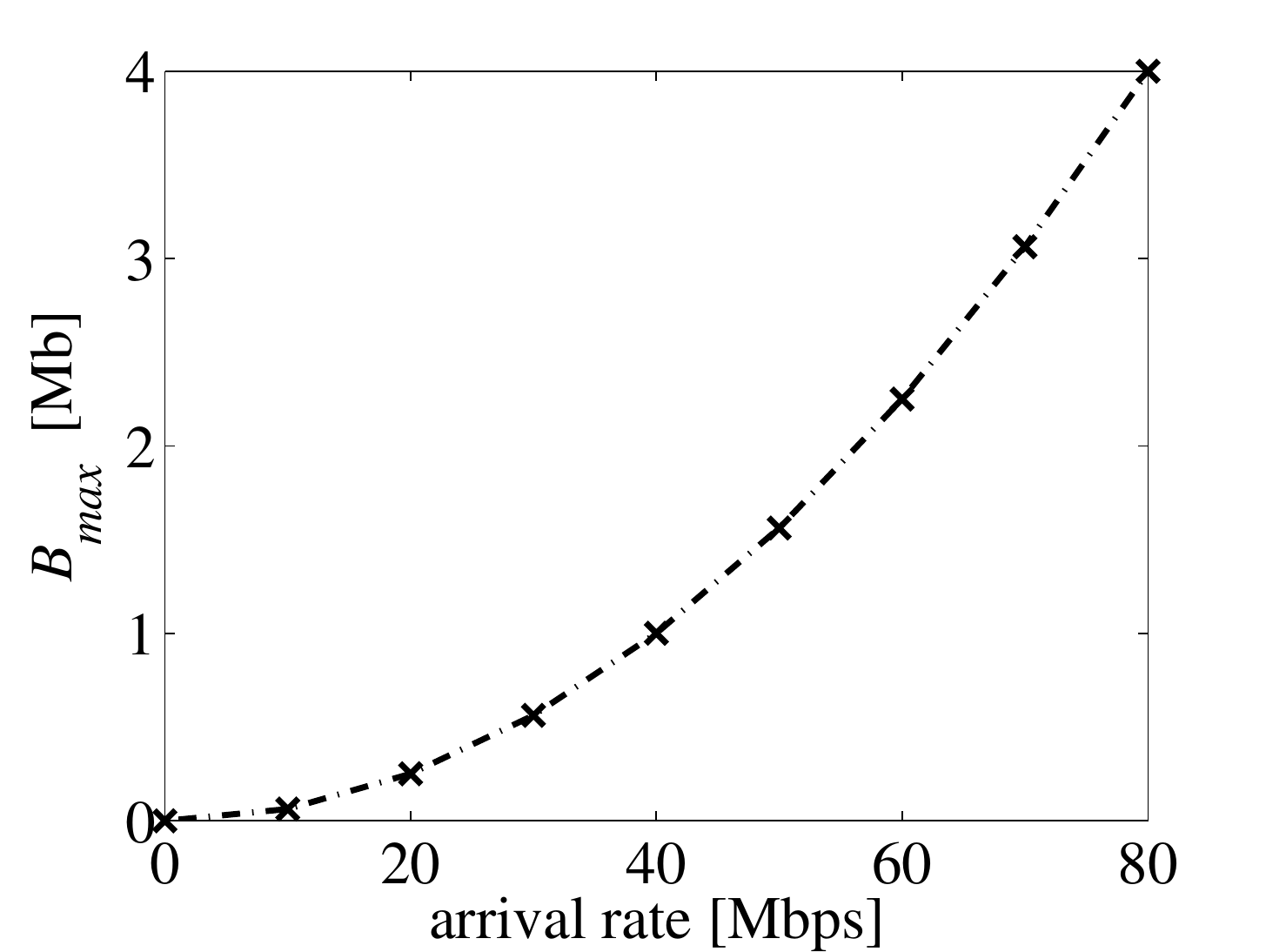, width=0.49
\linewidth}\label{fig:ratescanninga}}
\subfigure[Rate scanning results with different rate limits.]
{\epsfig{file=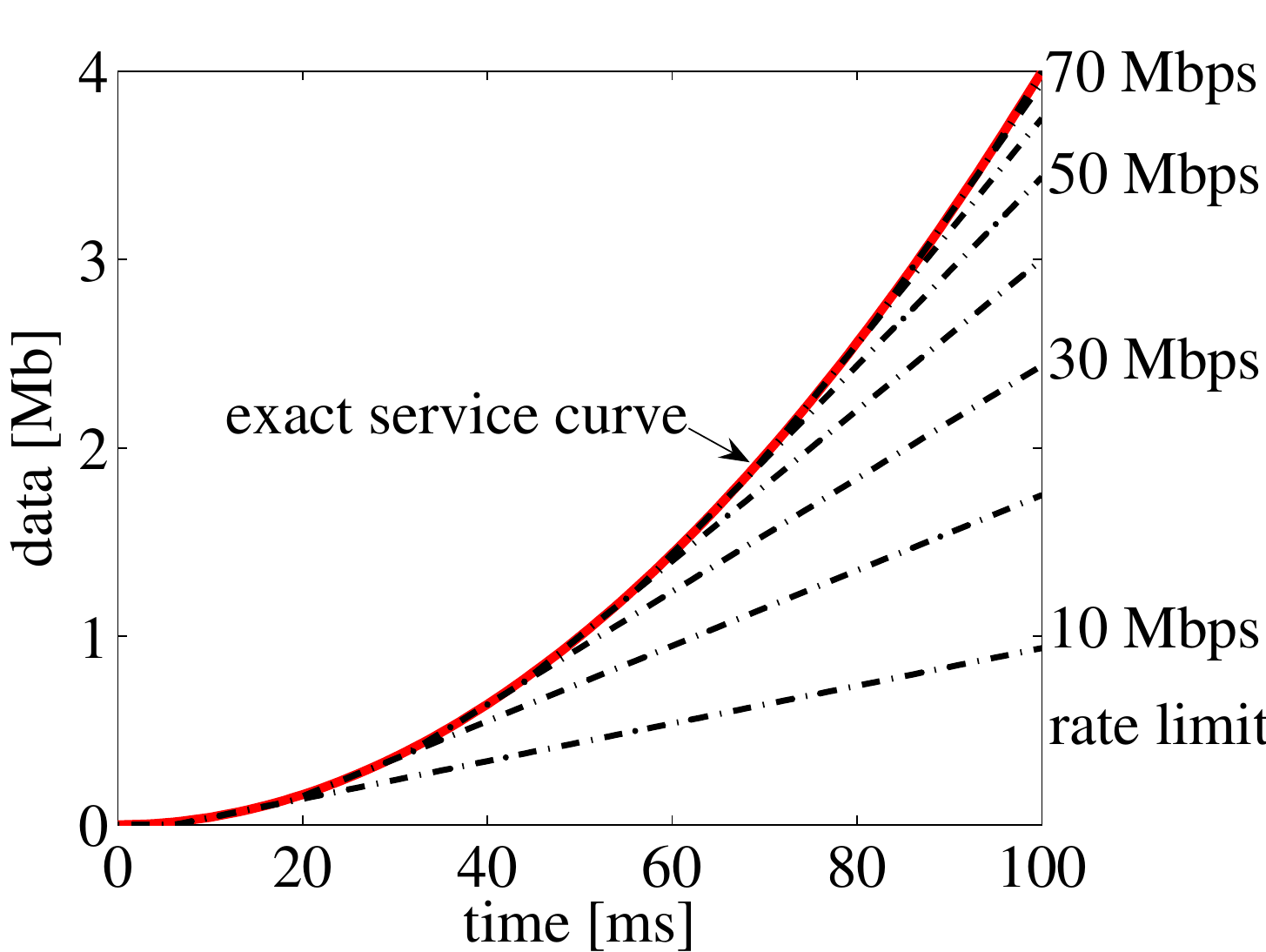, width=0.49
\linewidth}\label{fig:ratescanningb}}
\caption{Service curve estimation with rate scanning.\label{fig:ratescanning}}
\end{figure}

\begin{figure}
\centering
\subfigure[Rate Chirps.] {\epsfig{file=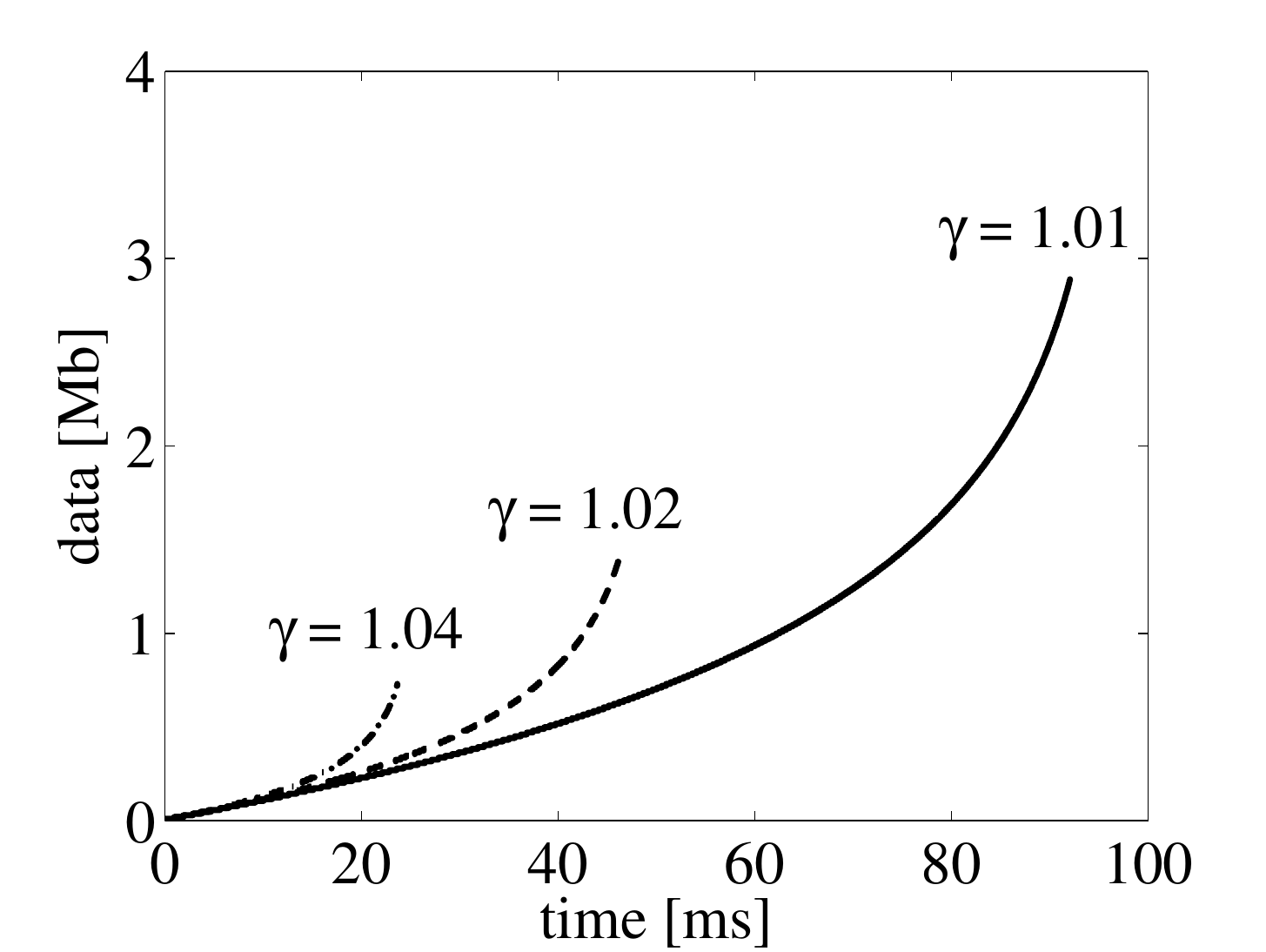,
width=0.49 \linewidth}\label{fig:ratechirpa}}
\subfigure[Rate chirp results with different spread factors.]
{\epsfig{file=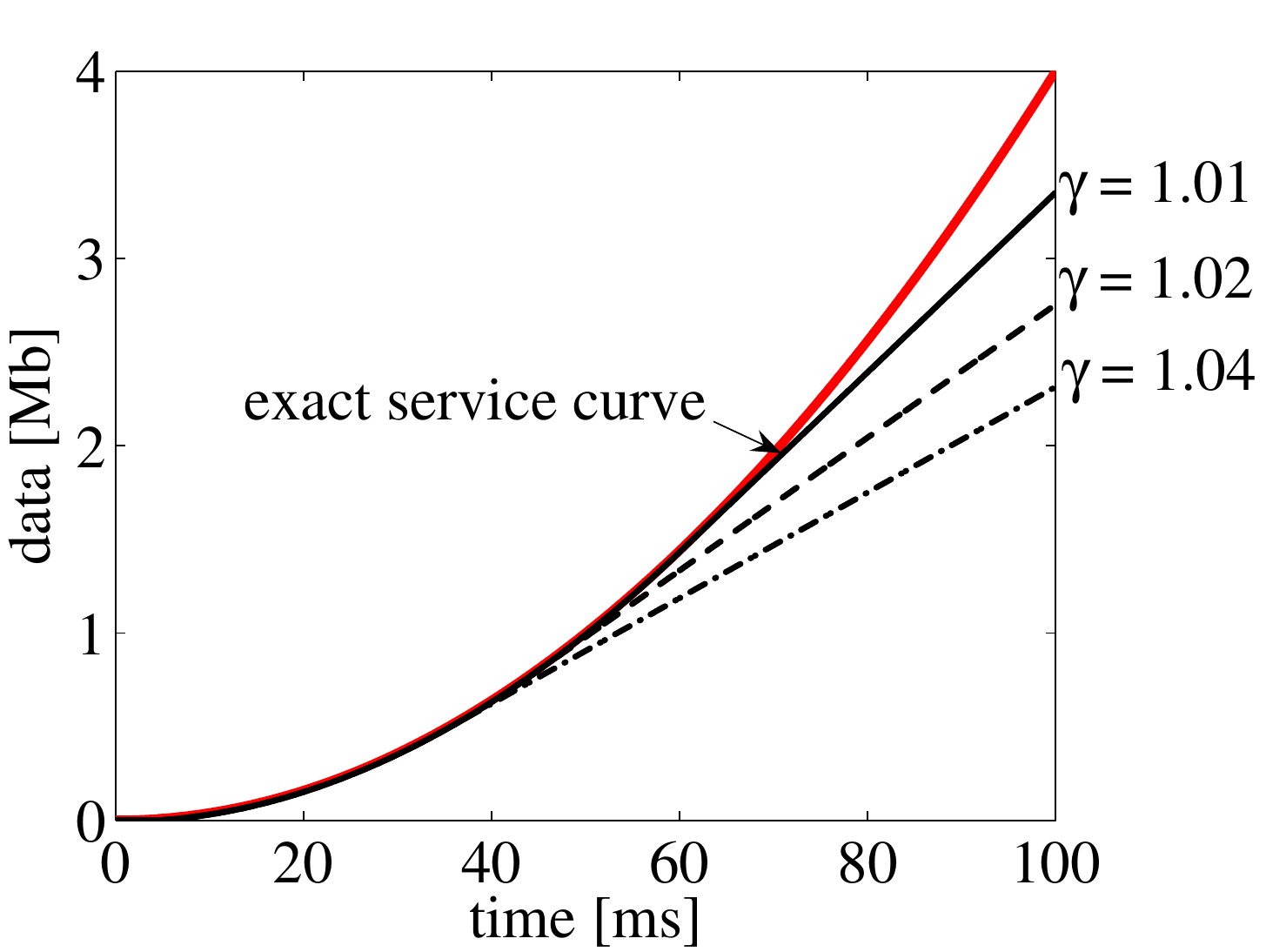, width=0.49
\linewidth}\label{fig:ratechirpb}}
 \caption{Service curve estimation with rate
chirps.\label{fig:ratechirp}}
\end{figure}

\subsection{Rate Chirps}
\label{subsec:ratechirp}

The need of rate scanning  to measure a possibly large number of
packet trains has motivated the {\em pathchirp} method \cite{ribeiro:pathchirp}, where  available bandwidth estimates are based on the measurement of  a single packet train, with a geometrically decreasing
inter-packet spacing.  The approach takes inspiration from chirp
signals in signal processing, which are signals
whose frequencies change with time. We refer to this approach as {\em
rate chirp}, since the decreased gap between packets corresponds to
an increase of the transmission rate.
We will show that a rate chirp scheme can be justified in  min-plus system theory using properties of the
Legendre transform.

Suppose we have a lower service curve $\Smin$ satisfying
$D \geq A \conv \Smin$ for all pairs $(A,D)$. Taking
the Legendre transform we
obtain with the order reversing property of Eq.~(\ref{eq:leg-rev}) and
with Eq.~(\ref{eq:leg-add}),   that
\begin{equation*}
\Legendre_D \le  \Legendre_{A \conv \Smin}
=   \Legendre_A + \Legendre_{\Smin} \label{eq:upperlegendrefirsttransform} \ .
\end{equation*}
We can re-write this  as
\[
\Legendre_{\Smin} \ge \Legendre_D - \Legendre_A \ ,
\]
as long as the difference
$\Legendre_D (r) - \Legendre_A(r)$ is defined for all $r$.
A sufficient condition is that $\Legendre_A (r) < \infty$,
since it prevents  both transforms $\Legendre_D$ and
$\Legendre_A$ from becoming infinite at the same value of $r$.
Another application of Eq.~(\ref{eq:leg-rev}) yields
\[
\Legendre(\Legendre_{\Smin}) \! \leq \!  \Legendre(\Legendre_D - \Legendre_A) \ .
\]
If the system is min-plus linear, that is, $D = A \conv S$, we
get,
\[
\Legendre(\Legendre_S) \! = \!  \Legendre(\Legendre_D - \Legendre_A) \ .
\]
If $S$ is also convex, then by Eq.~(\ref{eq:leg-convex}),
we  have
$S = \Legendre(\Legendre_D - \Legendre_A)$.

This provides us with a justification for
{\em pathchirp} \cite{ribeiro:pathchirp} as a probing method. If we depict
the transmission of a packet chirp as a fluid flow function, we see
that it grows to an infinite rate, thus, yielding a Legendre
transform that is finite for all rates. By measuring  arrivals and
departures of the chirp, denoted by $A^{chrp}$ and $D^{chrp}$, we
can compute a function $\tilde{S}$ by
\begin{equation}
\tilde{S} (t) = \Legendre(\Legendre_{D^{chrp}} - \Legendre_{A^{chrp}}) (t) \ .
\label{eq-chrp}
\end{equation}
If the network satisfies  $D = A \conv S$ for all arrivals, then
the right hand side of Eq.~(\ref{eq-chrp}) computes $\Legendre(\Legendre_S)$.
With  Eq.~(\ref{eq:leg-nonconvex}), we obtain
$ \tilde{S} \leq S$,
which tells us that  $\tilde{S}$ is a lower
service curve that satisfies $D \geq A \conv \tilde{S}$ for any
traffic with arrival function $A$ and departure function $D$.
If $S$ is convex we have $\tilde{S} = S$,
and we can recover the service curve exactly.

In practical probing schemes, our fluid flow interpretation where a packet chirp can grow to an infinite rate is idealized, since a rate chirp cannot be transmitted faster than the data rate at the
sender of probe packets.
For a packet chirp that is transmitted
in a time interval $[0, t^A_{max}]$ and
where $D$ is observed over an interval $[0, t^D_{max}]$,
the following adjustment complies with the formal requirements of
our equations:
\begin{eqnarray}
\tilde{A}^{chrp} (t) & = &
\begin{cases}
A^{chrp} \ ,    & \mbox{if } 0 \leq t \leq t^A_{max} \ , \\
\infty \ ,     & \mbox{if } t > t^A_{max}  \ ,
\end{cases}
\nonumber \\
\tilde{D}^{chrp} (t) & = &
\left\{
\begin{array}{l}
D^{chrp}  \ ,    \mbox{if } 0 \leq t \leq t^D_{max} \ , \\
D^{chrp}   (t^D_{max}) + (t - t^D_{max}) \frac{d D^{chrp}}{dt} (t^D_{max}) \ , \\\
\qquad\qquad \mbox{if } t > t^D_{max} \ .
\end{array}
\right.  \nonumber
\end{eqnarray}
The arrival function is simply set to $\infty$ past the
last measurement. The departure function is continued
at a rate that corresponds to its slope at the time of the last
measurement. For convex service curves $S$, the above extensions are conservative.

In Fig.~\ref{fig:ratechirpa} we show several rate chirps for a
network probe. The rate chirp  consists of a step-function which emulates a sequence of probing packets of 1200~bytes. The packets  are transmitted at an increasing rate, starting
at 10~Mbps and growing to 200~Mbps. The rate is increased by
reducing the elapsed time between the transmission of the first bit
of two consecutive packets, by a constant factor $\gamma$, which is
called the spread factor in~\cite{ribeiro:pathchirp}. Larger values
for $\gamma$ lead to shorter chirps that grow faster to the maximum
rate. In Fig.~\ref{fig:ratechirpb}, we show the service curves
computed from the chirps in Fig.~\ref{fig:ratechirpa}. The actual
service curve is  $S(t) = 0.4 t^2$, indicated as a thick (red) line in the
figure. A chirp with a smaller spread factor
$\gamma$, which transmits more packets over longer time interval,
leads to better estimates of the service curve.

\section{Bandwidth estimation in non-linear systems}
\label{sec-fifo}

Extending bandwidth estimation to systems that are not min-plus
linear, i.e., cannot be described by an exact service curve, raises
difficult questions. First, the problem formulation of bandwidth estimation at the beginning of Section~\ref{sec-probetheory} has shown that the problem has the structure of a maximin optimization. Moreover,  in
networks with non-linearities  the network service available to a traffic flow may depend on the traffic transmitted by this flow.  If this is the case, knowledge of the available bandwidth may not help with predicting network behavior.

In this section, we provide solutions for a class of networks that
can be decomposed into disjoint min-plus linear and
non-linear regions. These networks behave like a min-plus linear system at low load, and become non-linear when the traffic rate is increased beyond a threshold.
In such a network, the goal of
bandwidth estimation should be to determine the available bandwidth  of the linear region.
The interpretation is that  the available bandwidth denotes the maximum additional load that the network can carry without
degrading to a non-linear system.
Our work is motivated by studying the available bandwidth at a FIFO link. While we conjecture that most networks can be adequately described by a system that behaves linearly at low loads, the actual scope of this class of networks remains
an open problem.

\begin{figure}
\centering
\subfigure[Cross and probe traffic.] {\epsfig{file=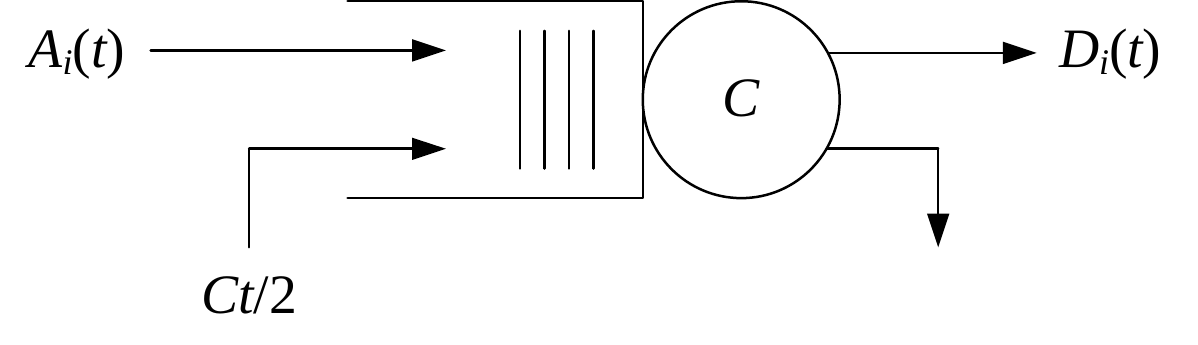,
width=0.49 \linewidth}\label{fig:fifofig}}
\subfigure[Departures of probes.]
{\epsfig{file=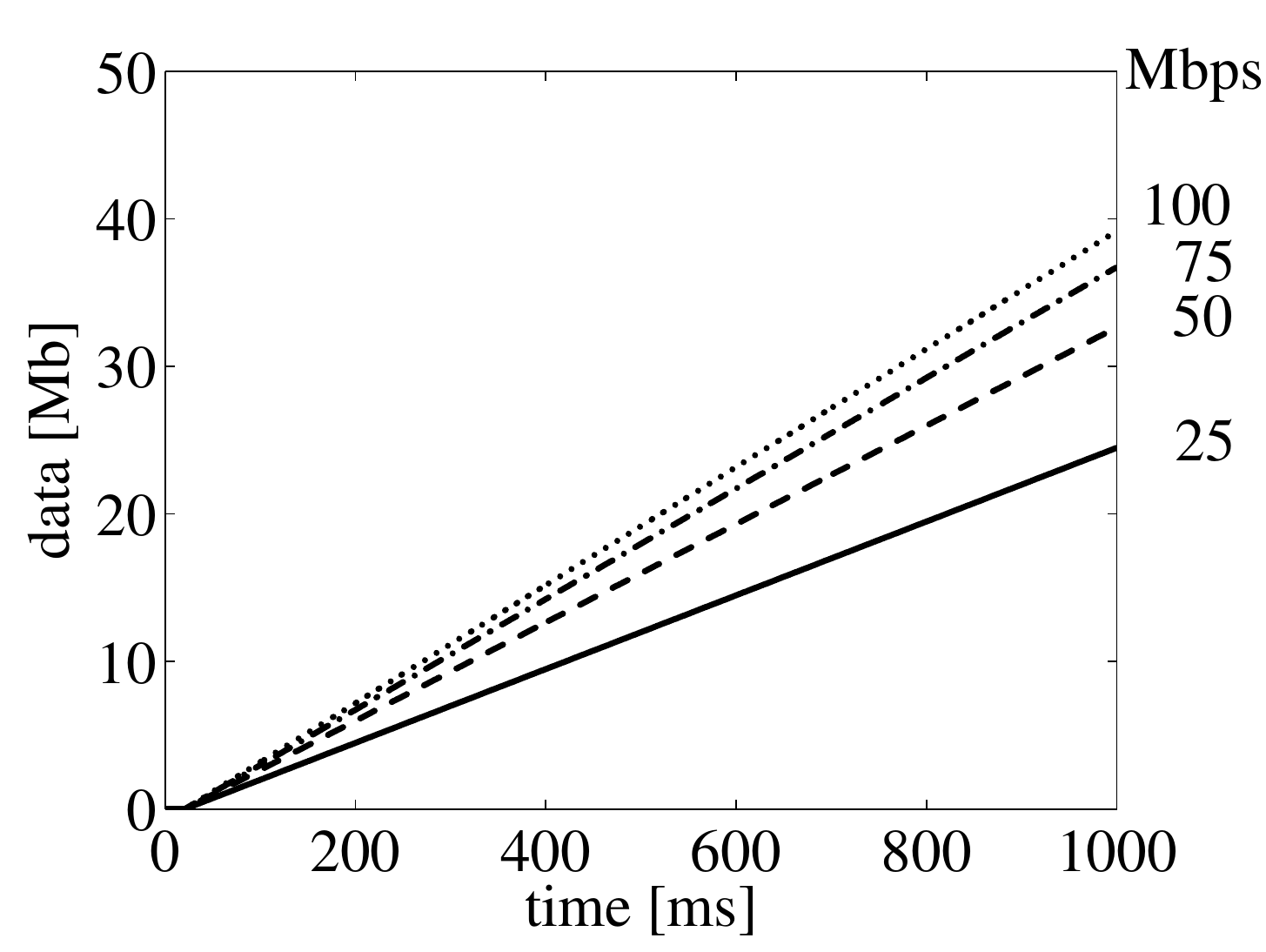, width=0.49
\linewidth}\label{fig:fifo}}
\caption{FIFO system.}
\end{figure}

\subsection{Non-linearity of FIFO systems}

Consider the FIFO system shown in Fig.~\ref{fig:fifofig} with
capacity $C$. Assume that we have constant-bit rate traffic that is transmitted in
800~byte packets. The FIFO queue experiences (cross) traffic at a
rate of $r_c$, and probing traffic is sent according to $A(t) = r
t$. Assuming  a link capacity of $C=50$~Mbps and cross traffic of
$r_c=25$~Mbps, we consider a probing rate of $r=25,50,75,100$~Mbps.
For an {\em ns-2} simulation of this system, Fig.~\ref{fig:fifo} depicts the departure function of the probe
packets for the range of probing rates. As seen previously for passive measurements at a FIFO queue (see
Fig.~\ref{fig:video}), once the probing traffic
exceeds the unused capacity, it preempts cross traffic and results
in an overly optimistic estimate of the available bandwidth.
Empirical observations of  FIFO systems with CBR cross and probe
traffic in~\cite{melander:fcfsprobing} suggested the following departure
function:
\begin{equation}
D(t) = \begin{cases} rt \ , & \text{if } r \le C-r_c \ , \\
\frac{r}{r+r_c} Ct \ , & \text{if } r > C-r_c \ .
\end{cases}
\label{eq:fifo1}
\end{equation}
Thus, if the probing rate is above the threshold $C-r_c$, the
capacity allocated to the probe and cross traffic is proportional to
their respective rates. As a result,  probing traffic gets more
bandwidth when  its rate is increased.

We now  offer a  min-plus system interpretation of bandwidth
estimation for the depicted FIFO scenario. Consider the function
$S_{fifo} (t) = (C- r_c) t$. From the empirical departure
characterization $D$ of a FIFO system from Eq.~(\ref{eq:fifo1}), we
can verify that the following is satisfied for all $t \geq 0$:
\begin{equation}
\begin{array}{l l}
D(t)  & = ( rt ) \conv S_{fifo} \ , \text{if } r \le C-r_c  \\
D(t)  & \geq ( rt ) \conv S_{fifo} \ , \text{if } r > C-r_c  \ .
\end{array}
\label{eq:fifo2}
\end{equation}
Therefore,  $S_{fifo}$ is an exact service curve for $A(t) = rt$
when $r \leq C - r_c$, and  $S_{fifo}$ is a lower service curve when
the arrivals exceed the threshold value. In fact, $S_{fifo}$ is the
largest lower service curve for a FIFO system, and a solution to the
maximization in Section~\ref{sec-probetheory}. Any function larger
than $S_{fifo}$ may not be a lower service curve for rates $r> C-r_c$,
indicating that a FIFO system is not min-plus linear in this range.

These considerations suggest to view a FIFO network as a system that
is min-plus linear at rates $r \le C-r_c$, and crosses into a
non-linear region when the rate exceeds the threshold. The crossing
of these regions coincides with the point where the available
bandwidth $S_{fifo}$ can be observed.

Probing schemes that vary the rate of probe traffic can sometimes be
interpreted in terms of searching for the crossover from a linear to
a non-linear regime. In particular, the rules in {\em pathload} and
{\em pathchirp} to stop measurements when increasing delays are
observed can  be justified in terms of crossing the non-linear
region (at least in a FIFO system), since a probing rate above $C -
r_c$ is the turning point when the buffer of the FIFO system fills up. In the
remainder of this section, we address the problem of locating this
crossover point using systems-theoretic arguments.

\subsection{Stopping Criteria}
\label{subsec:stop}

We address the problem of determining the threshold
probing rate for a system with
disjoint linear and non-linear regions. The threshold probing rate can be interpreted as the maximum rate at which the network can be probed without leaving the linear region.  We refer to a condition that determines the maximum probing rate as a {\em stopping criterion.}

{\em Non-linearity Criterion: } In a min-plus linear system, the service curve is independent of the traffic intensity of the probe traffic. If we have obtained, under assumption of min-plus linearity, a lower service curve $\tilde{S}$  from a measurement probe with functions $(A, D)$, then $\tilde{S}$ must be a lower service curve for any other arbitrary measurement probe  $(A', D')$, that is, $D' (t) \geq A' \conv \tilde{S} (t)$ for  all times $t$. A violation of the inequality indicates that the assumption of linearity used for the computation of $\tilde{S}$ is false.

A simple non-linearity test can be devised for systems where increased traffic does not result in decreased output.
Consider a sequence of probes  $(A_i, D_i)_{i=1,2,\ldots, n}$, where  the traffic intensity of subsequent probes is increased, that is, $A_{i+1} \geq A_i$. By assumption, we also have $D_{i+1} \geq D_i$. Each probe results in an estimate $\tilde{S}_k (t)$.
If there is a $k$ for which  $\tilde{S}_k$ violates linearity for some $i \leq k$, that is, $D_i (t) < A_i \conv \tilde{S}_k (t)$ for  some $t$, then the network is no longer in the linear region for the probe  $(A_k, D_k)$.

The described criterion can be directly applied to a rate  scanning approach with increasing  probing rates where $A_i (t) = r_i t$ with  $r_{i+1} > r_i$.  As an alternative, one could  modify the scanning rate
to perform a search for the maximum scanning rate in the linear region. This  makes the criterion more similar to the scanning pattern in {\em pathload}.

Applying the non-linearity criterion to a rate chirp approach is less straightforward, since there is only a single
arrival function $A^{chirp}$. Generating  multiple arrival functions from a single rate chirp by
truncating the arrival functions merely produces truncated versions of the same service curve. Transmitting multiple rate chirps with different spread factors (see Fig.~\ref{fig:ratechirpa}) makes the criterion applicable, yet, it loses the main advantage of rate chirps of  requiring only a single packet train.
Thus, with only a single packet train, we are unable to justify a stopping criteria from min-plus linear systems theory.

{\em Backlog Convexity  Criterion:} This method is applicable to the rate scanning methods, with probing rates  $r \in [r_1, r_2, \dots, r_n]$ with $r_{i+1} > r_i$. Assume that the maximum backlog measurement is $B_{\max}(r)$ for rate $r$. Recall from
Subsection~\ref{subsec:ratescan}, that a linear system satisfies
$S(t) = \Legendre_{B_{\max}}(t)$ and  $B_{\max}(r) =
\Legendre_S(r)$ holds for all $r$. This motivates a test for
linearity that exploits properties of the Legendre transform
discussed in Subsection~\ref{subsec-legendre}. Under the assumption
of linearity, an estimate of the service is obtained from
\[
\tilde{S} (t) = \Legendre_{B_{\max}}(t) \ ,
\]
and  $B_{\max}(r) = \Legendre_{\tilde{S}}(r)$ holds for all $r$.
Using Eq.~(\ref{eq:leg-convex}) and Eq.~(\ref{eq:leg-nonconvex}), if
there exists an $r$ such that $B_{\max}(r)$ is not convex, i.e.,
\begin{equation*}
B_{\max}(r) \neq \text{conv}_{B_{\max}} (r)
\end{equation*}
for some $r$, we have $B_{\max}(r) \neq \Legendre_{\tilde{S}}(r)$,
and, hence, the hypothesis of a linear system is dismissed.

For systems that are linear at low probing rates and cross into a
non-linear region after a threshold is reached, a convexity test
can be easily devised for  schemes
that incrementally increase the probing rate. After each rate
step, one simply performs a test for equality of $B_{\max}(r)$ and
$\Legendre(\Legendre_{B_{\max}})(r)$. If
$B_{\max}(r) \neq \Legendre(\Legendre_{B_{\max}})(r)$, the
system has reached the non-linear region and the rate
scan is terminated. Otherwise, it is assumed that the system
is still linear, and the probing rate is increased.

To avoid false positives and negatives in the test for equality, we suggest a heuristic that can
account for variability in the measurements. For each value of $r$, we compute the difference
\begin{equation*}
\Delta B(r) = B_{\max}(r) -
\Legendre(\Legendre_{B_{\max}})(r)
\end{equation*}
Note that $\Delta B(r)$ is generally positive since
$B_{\max}(r) \ge \text{conv}_{B_{\max}}(r)$. When the normalized difference $\Delta B(r) /r$  exceeds a threshold value, we assume that
the system is no longer in the linear region.
Outliers that are due to random fluctuations can be  eliminated by median filtering $\Delta B(r)/r$ before applying the threshold test.
In our experiments, we
use a threshold of $\alpha = 4$~ms and perform median filtering.

An additional issue is that, since  packet trains
have a certain length, the maximum backlog  $B_{\max}$
may not be attained by a train. In order to
apply the backlog convexity criterion to finite-length packet trains, it
must be shown that the backlog that is created by fixed length
packet trains also violates convexity once the boundary to the
non-linear region is crossed. As an example, for FIFO systems, we obtain from
Eq.~(\ref{eq:fifo1}) that the maximum backlog  generated by a
packet train of $L$ bits is
\begin{equation*}
B^L_{\max}(r) =
\begin{cases}
0 \; , & \text{ if } r \le C - r_c \\
L \cdot \left( 1 - \frac{C}{r + r_c} \right) \; , & \text{ if } r
> C - r_c.
\end{cases}
\end{equation*}
For all $r > C - r_c$ the second derivative of $B^L_{\max}(r)$ is
negative and thus $B^L_{\max}(r)$ is strictly concave, while it is
convex for $r \le C - r_c$. Thus, the backlog convexity criterion can be
applied for finite packet trains in this case.

\section{Experimental Validation}
\label{sec-emulab}

In this section, we present measurement experiments on an IP network
 that provide an empirical evaluation of the proposed
system theoretic approach to bandwidth estimation.
Specifically, we attempt to provide answers to the following questions:
\begin{itemize}
\item How well does the described min-plus systems theory which
assumes an idealized fluid-flow characterization of traffic and
service translate in a packet based environment?
\item How robust are the available bandwidth methods to changes of the distribution of the cross traffic?
\item How well is a min-plus systems theoretical approach suitable for finding end-to-end estimates
over multiple links?
\end{itemize}

We conduct a series of measurement experiments on the {\em Emulab}
network testbed at the University of Utah \cite{emulab}, where
experiments are run on a cluster of PCs that are interconnected by a switched Ethernet network. Propagation delays are emulated by  PCs
that buffer packets in transmission. Emulab provides a realistic IP network environment, yet it
offers a controlled lab
environment  where traffic and resource availability can be explicitly configured. The ability to precisely control network resources enables us to evaluate how well available bandwidth estimates match the configured availability of network resources.

In our experiments, we take advantage of the fact that
system clocks in the Emulab testbed are synchronized up to 1~ms. According to our discussion in
Section~\ref{sec-probetheory}, if synchronized clocks are not available, then
the service curves computed in this section should be interpreted as being horizontally displaced by an unknown amount.

We have implemented the probing schemes for rate-scanning  (Section~\ref{subsec:ratescan}) and  rate chirps (Section~\ref{subsec:ratechirp}) using the {\em rude-crude} traffic generator \cite{rude-crude}. In addition, in some experiments we include for benchmark comparison the results of measurements using an unmodified version of the {\em pathload} software.

\begin{figure}
\centering \epsfig{file=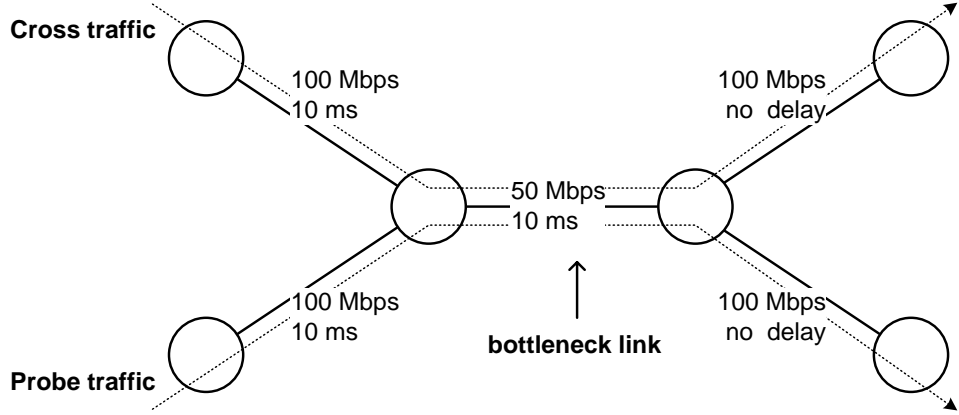, width=0.7 \linewidth}
\caption{Dumbbell topology.} \label{fig:dumbbell}
\end{figure}

We first present measurements on a dumbbell topology as shown in
Fig.~\ref{fig:dumbbell}, where each node is realized by a PC of
the Emulab network. The figure indicates the capacity and the
latency of each link. Packet sizes are set to 800 bytes for cross
traffic and 1472 bytes for probing traffic. The average data rate
of the cross traffic is set to 25~Mbps. The probing method
seeks to determine the unused capacity of the  link in
the center of the figure.
The measurements do not address losses of
probe traffic. In fact, when a probe packet is dropped, the
measurement for this packet is ignored.

\begin{figure}
\centering
\subfigure[Rate Scanning.] {\epsfig{file=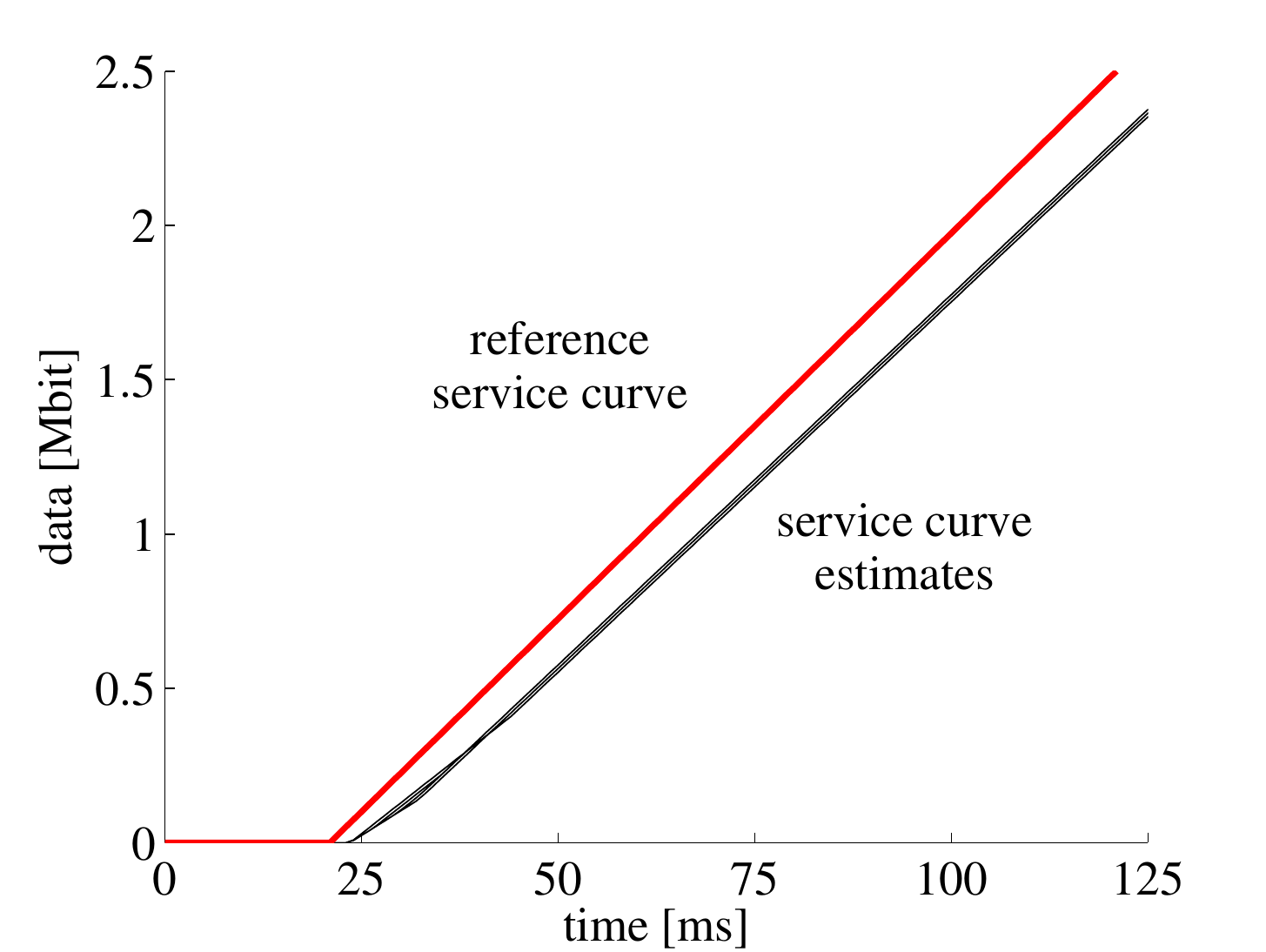,
width=0.49 \linewidth}\label{fig:chirp-vs-scan-a}}
\subfigure[Rate Chirp.] {\epsfig{file=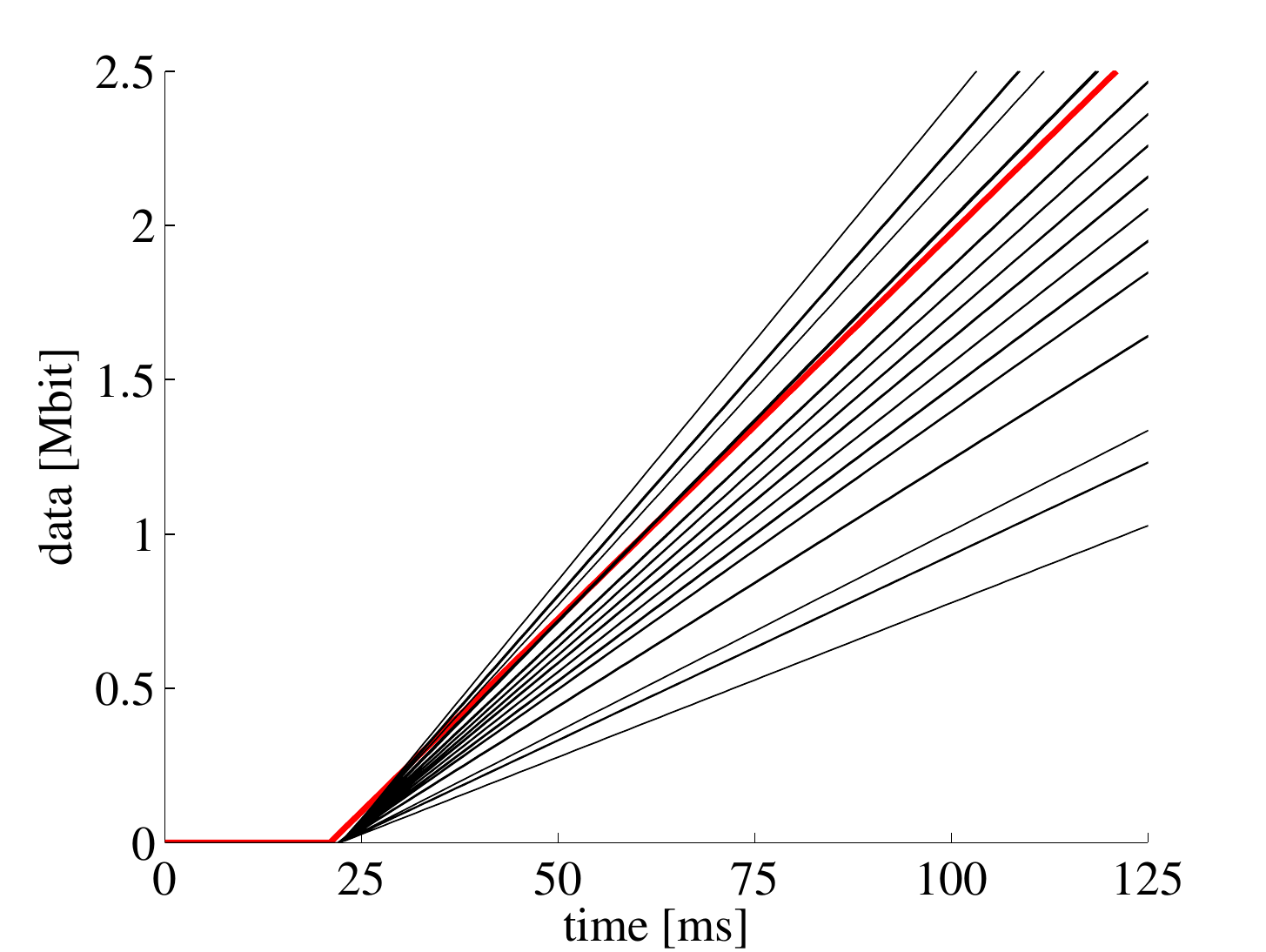,
width=0.49 \linewidth}\label{fig:chirp-vs-scan-b}}
\label{fig:chirp-vs-scan} \caption{Experiment 1: Service curve
estimates with CBR cross traffic.}
\end{figure}

\subsection{Experiment 1: Rate scanning vs. Rate chirps}
We first compare
the effectiveness of the Rate Scanning and Rate Chirp methods from
Sections~\ref{subsec:ratescan} and ~\ref{subsec:ratechirp} in the
dumbbell topology. We assume that CBR cross traffic is sent at a rate of 25~Mbps.

For the rate scanning method, each packet
train has 400 packets, transmitted in increments of 4~Mbps, up to at most 60~Mbps. The stopping criterion is the backlog convexity
criterion from Section~\ref{subsec:stop}
with a threshold of $\alpha = 4$~ms and a window size of $W=3$ for median filtering.

For the rate chirp method, the initial spacing of probe packets is
set to a rate of 4~Mbps, where the spacing between subsequent
packets is governed by spread factor of $\gamma = 1.05$.
The chirp is stopped once its instantaneous rate reaches  100~Mbps,
resulting in 66 packets for each chirp.
The reason we let the rate chirps go up
to 100~Mbps whereas the rate scans only go up to 60 Mbps is that
data points at the end of the chirps become quite sparse due to
the geometric increase of the chirp's rate.
For rate chirps, we employ the stopping criterion proposed in \cite{ribeiro:pathchirp}, which aims at finding the instantaneous data rate at which one way
packet delays start growing due to persistent overload.
(Note that an  application of the non-linearity criterion from Section~\ref{subsec:stop} to the rate chirp method would require multiple rate chirps.)

In Figs.~\ref{fig:chirp-vs-scan-a} and~\ref{fig:chirp-vs-scan-b}
we present the results of 100 repeated estimates of the available bandwidth in
terms of the computed service curves (shown as black graphs) for
the rate scanning and rate chirp method, respectively.
As noted in Section~\ref{sec-probe}, each sample of the available bandwidth can be thought of being conditioned on the state of the network.
We include,
as a red graph, a rate-latency curve with the minimal delay (of
approximately 21~ms)\footnote{The minimal delay consists of 20~ms
propagation delay and approximately 1~ms transmission delay.} and the average available
bandwidth (25~Mbps). This curve is referred to as  {\em reference
service curve} and serves as  an a priori bound for the available bandwidth computations.

A comparison of Figs.~\ref{fig:chirp-vs-scan-a} and~\ref{fig:chirp-vs-scan-b} shows that rate scanning provides more reliable estimates of the service curve than rate chirps.
We note that the  {\em pathchirp} method from \cite{ribeiro:pathchirp}, would yield better results since it   smoothes the available bandwidth over  11 estimates to deal with the variability of estimates from single rate chirps.
Rate scanning and rate chirps  perform equally in an ideal linear time-invariant system, while while rate chirps are more susceptible to random noise.

In the remaining experiments, we only consider the rate scanning
method.

\begin{figure}
\centering
\subfigure[Exponential cross traffic.]
{\epsfig{file=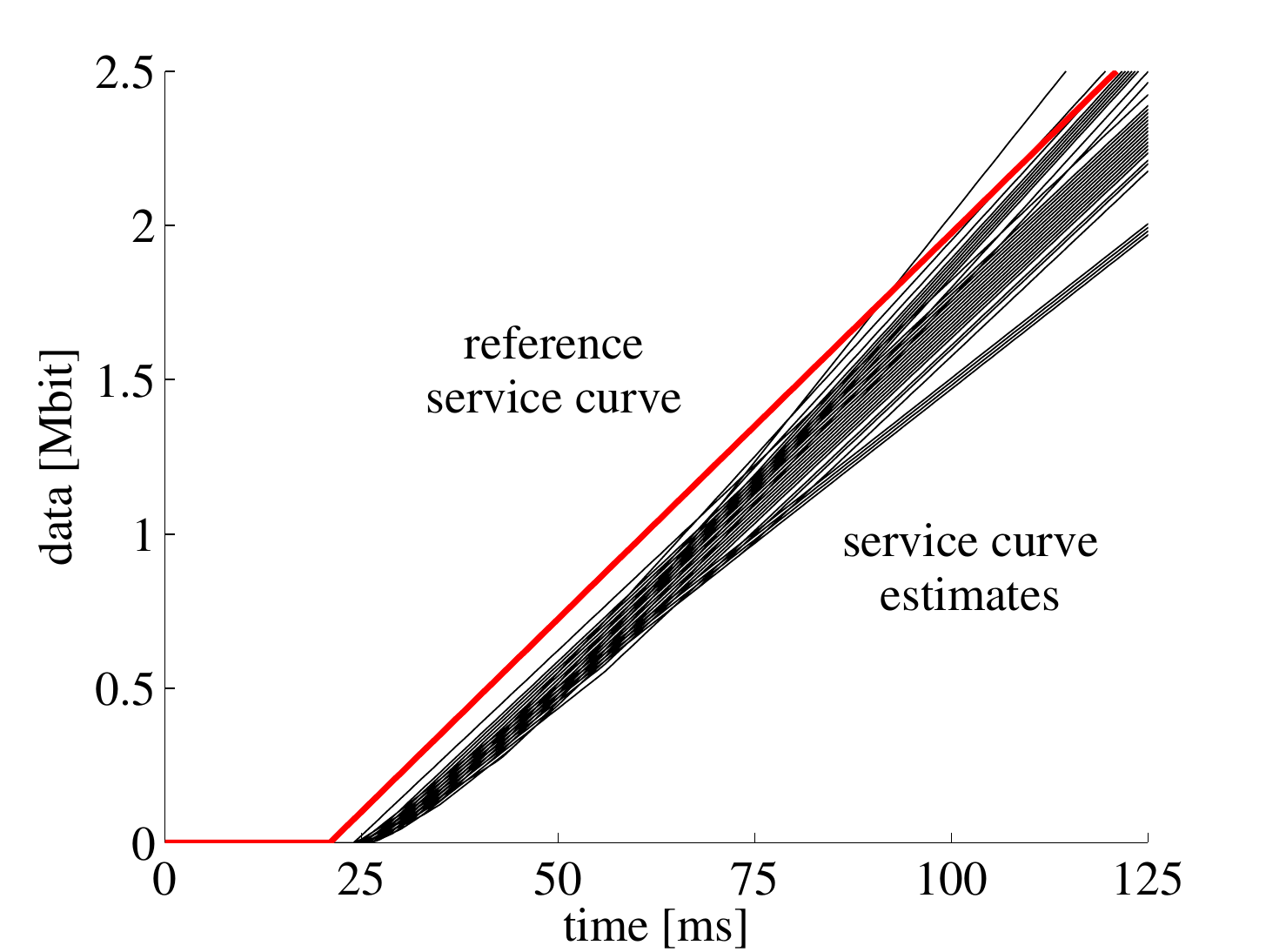, width=0.49
\linewidth}\label{fig:linearity-vs-overload-a}}
\subfigure[Pareto cross traffic.]
{\epsfig{file=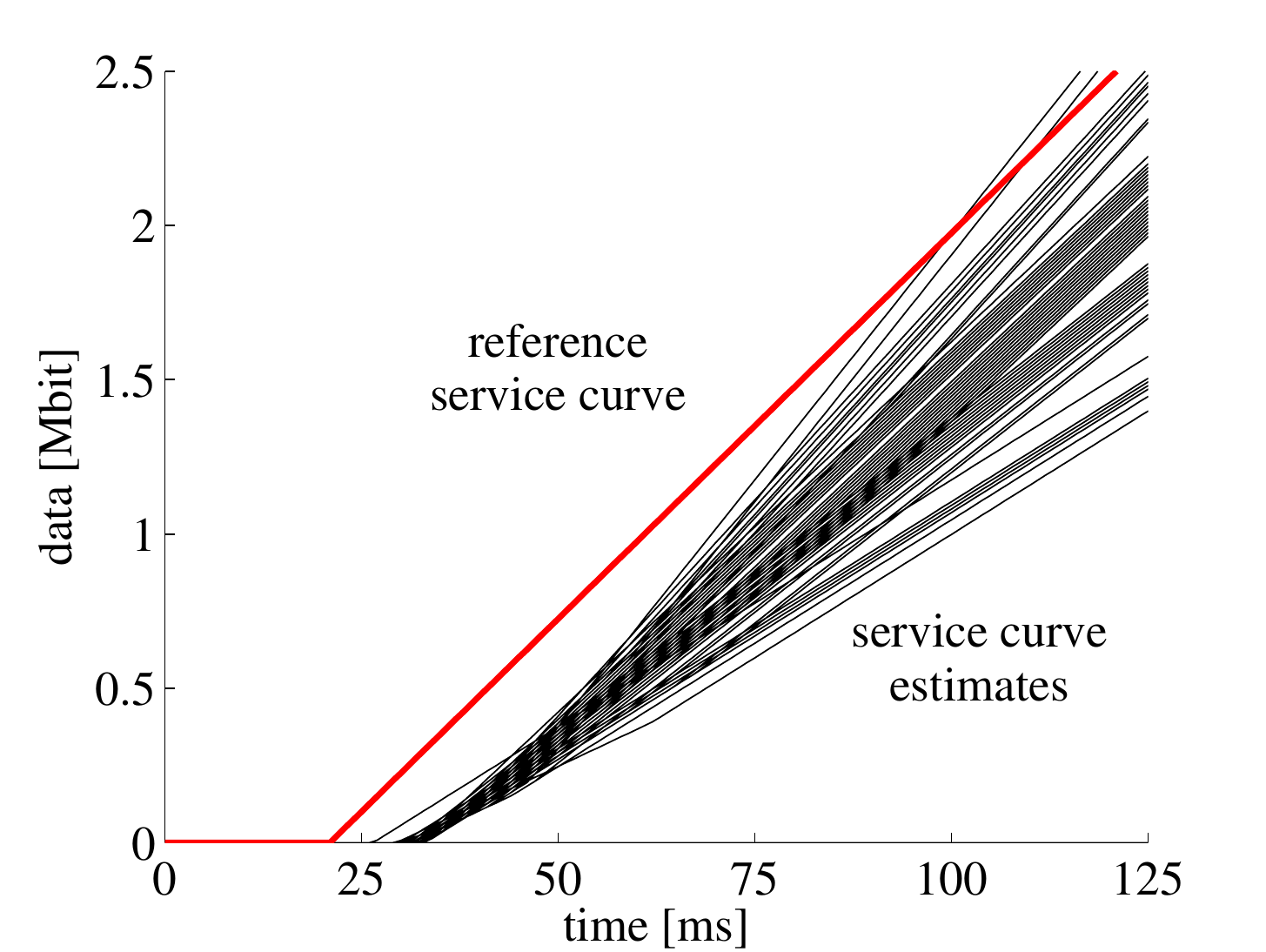, width=0.49\linewidth}  \label{fig:pareto-a}}
\label{fig:linearity-vs-overload} \caption{Experiment 2: Rate scanning with different cross traffic.}
\end{figure}

\subsection{Experiment 2: Different cross traffic distributions.}

In this experiment we evaluate the rate scanning method for
different distributions of the cross traffic on the dumbbell topology.
We consider cross traffic where interarrivals follow an exponential or Pareto distribution (with shape parameter set to 1.5).
All other parameters are as in Experiment~1. In particular, the average traffic rate of cross traffic is 25~Mbps.

In Figs.~\ref{fig:linearity-vs-overload-a} and~\ref{fig:pareto-a}, respectively,  we show the results for  exponential distribution and Pareto cross traffic. The reference service curve is shown in red.
Is is apparent that, compared to CBR cross traffic in Experiment~1, the higher variance
of the cross traffic results in a higher variability of the service
curve estimates. At the same time, even for Pareto traffic, almost all estimates of the available bandwidth provide a conservative bound for the reference  service curve.

\begin{figure}

\centering \epsfig{file=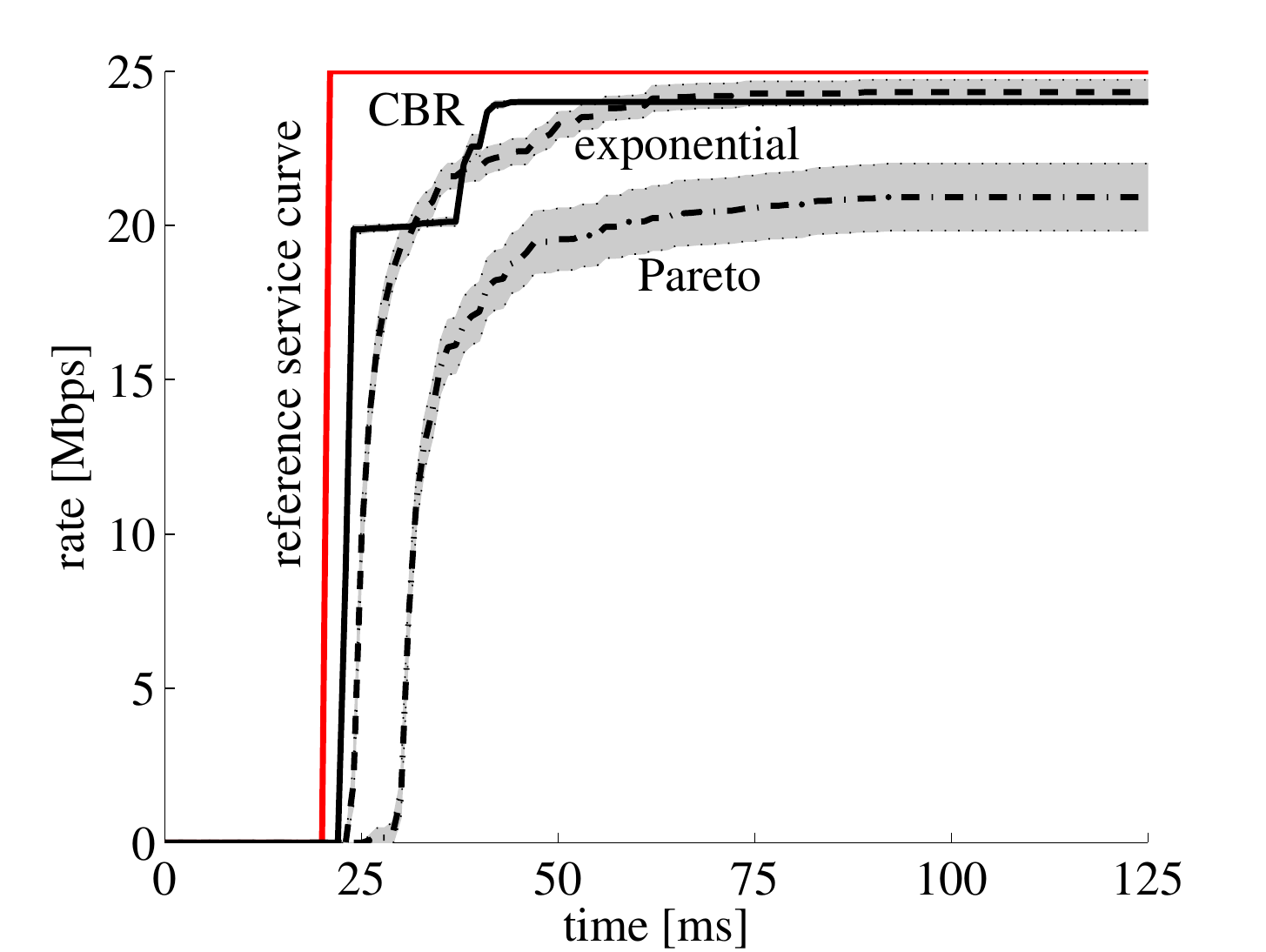,
width=0.7 \linewidth} \caption{Results from Experiments 1--2:
Derivative of service curves.}
 \label{fig:pareto-b}
\end{figure}
\begin{table}[t]
\begin{center}
\begin{tabular}{|l|r|r|}
 \hline
 Cross traffic & lower bound & upper bound \\
 \hline
 CBR  & 22.6 Mbps & 22.8 Mbps \\
 Exponential & 17.7 Mbps & 25.4 Mbps \\
 Pareto & 15.9 Mbps & 29.3 Mbps \\
 \hline
\end{tabular}
\end{center}
\caption{Pathload measurements.}  \label{tab:pathload1}
\end{table}

In Fig.~\ref{fig:pareto-b} we reconcile the results from
Figs.~\ref{fig:chirp-vs-scan-a},~\ref{fig:linearity-vs-overload-a}
and ~\ref{fig:pareto-a} in a single graph. We compute the
derivatives of the service curves and plot the mean value averaged
over the 100~estimates (with 95\%~confidence
intervals). The graph also includes the derivative of the reference service curve (in red). The plot for the reference service
curve shows a sudden increase at time  21~ms, where the service
curve jumps to a rate of 25~Mbps. The derivatives of the
service estimates for CBR, exponential, and Pareto cross traffic provide lower bounds, which become more pessimistic with increasing variance of the cross traffic distribution.

As a point of reference, we now show results of the {\em pathload}
application available from \cite{pathloadsources} for the same
network and cross traffic parameters. {\em Pathload} is frequently
used as a benchmark  to evaluate bandwidth estimation
techniques. The {\em pathload} application views available
bandwidth as a rate and returns a range that is averaged over a time interval $\tau$ which bounds the observed distribution of the available bandwidth. For each cross traffic type, we ran {\em
pathload} 100 times  and computed the average values of the lower and upper bounds of the estimated available bandwidth range. The results are summarized in Table~\ref{tab:pathload1}. A comparison of the table with Fig.~\ref{fig:pareto-b} shows that the lower bounds of the min-plus theoretic estimation yield service curves,  whose long term average rate is similar to or above the lower bound of {\em pathload} measurements. As expected, the variation range increases if the cross-traffic has a higher variability.

\begin{figure}
\centering \epsfig{file=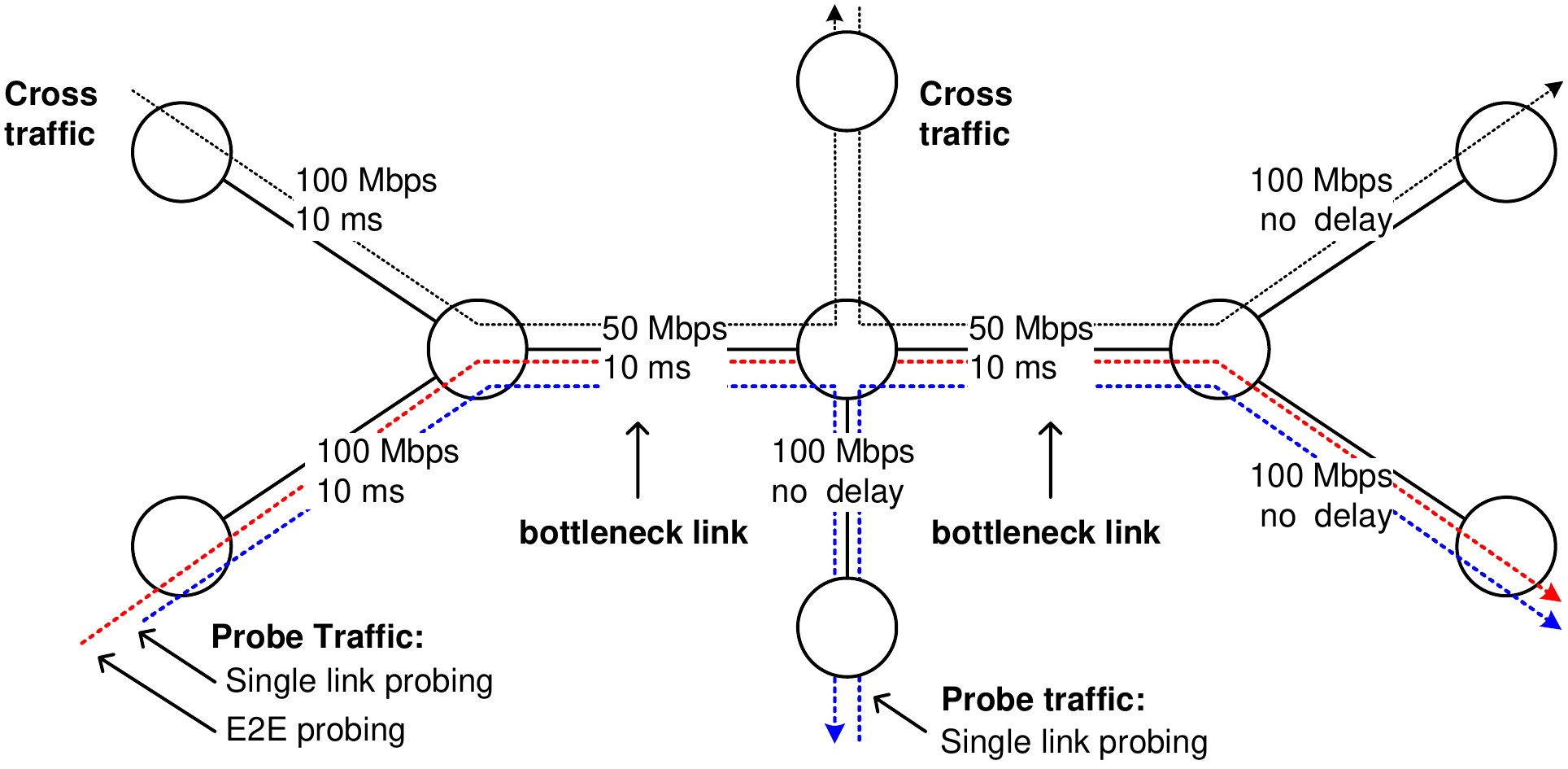, width=0.95 \linewidth}
\caption{Topology with multiple bottleneck links.}
\label{fig:dumbbell-2}
\subfigure[Two  bottleneck links.]
{\epsfig{file=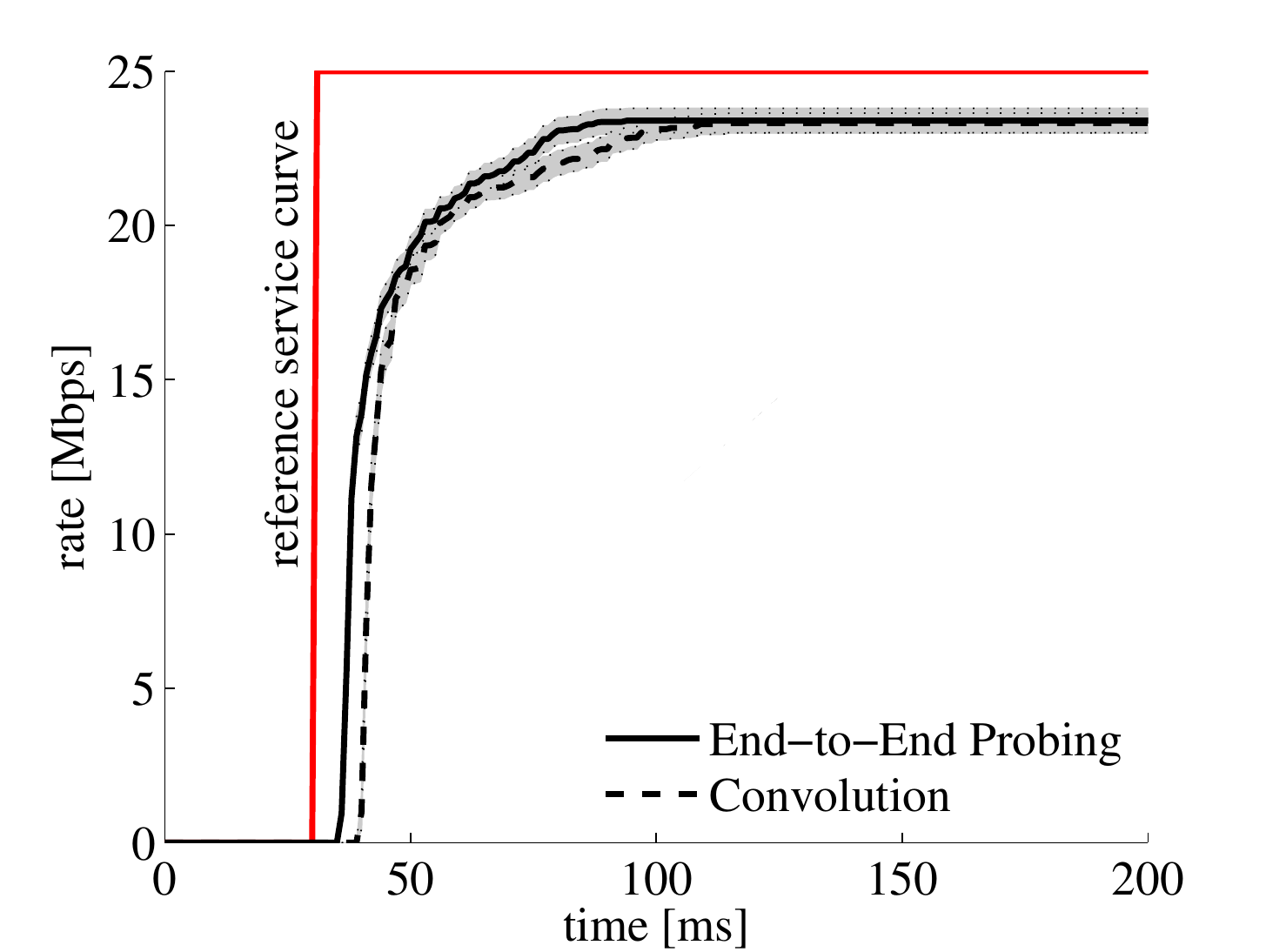, width=0.49
\linewidth}\label{fig:2node-endtoend}}
\subfigure[Three  bottleneck links.]
{\epsfig{file=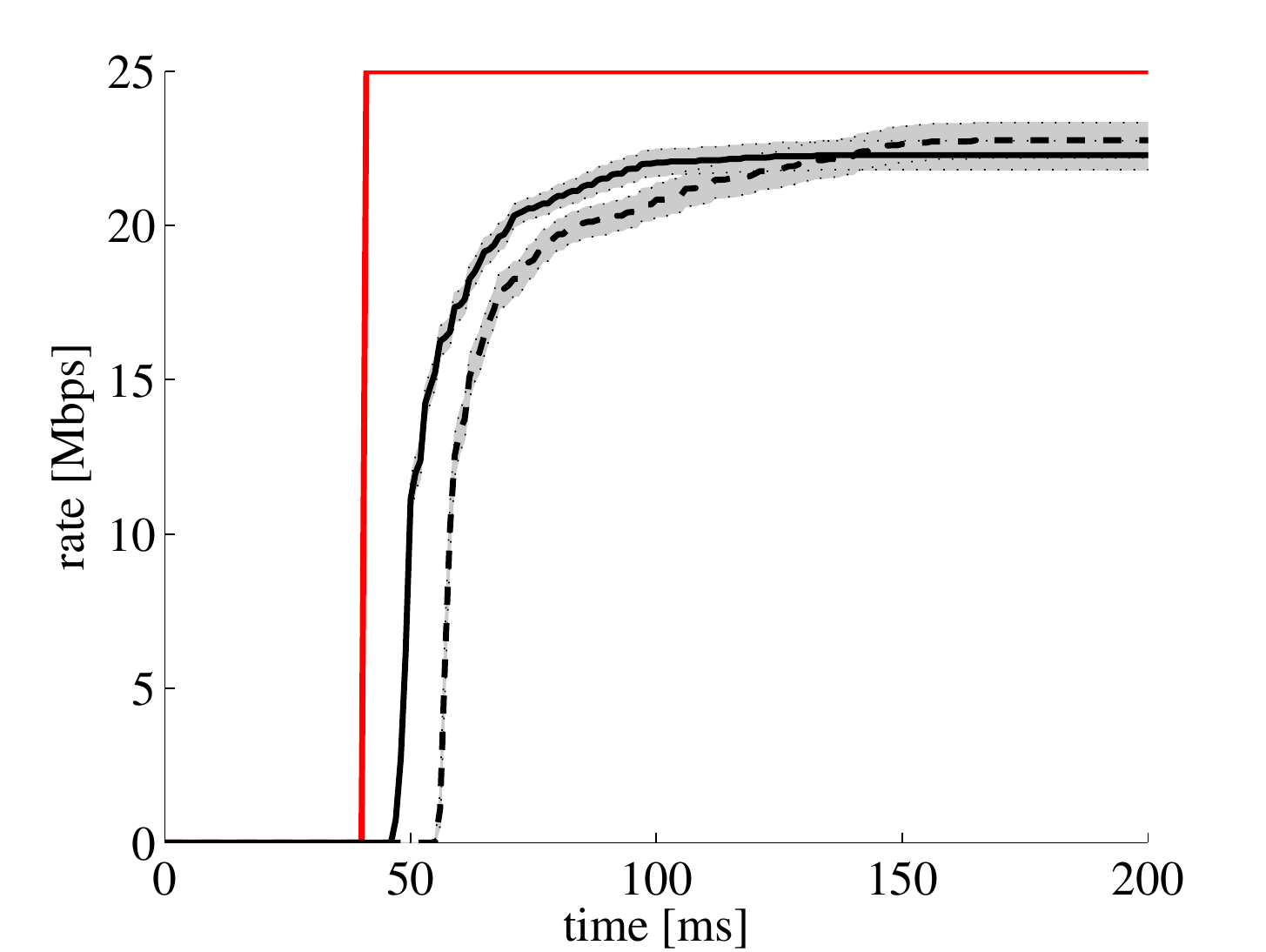, width=0.49
\linewidth}\label{fig:3node-endtoend}}

\centerline{ \subfigure[Four bottleneck links.]
{\epsfig{file=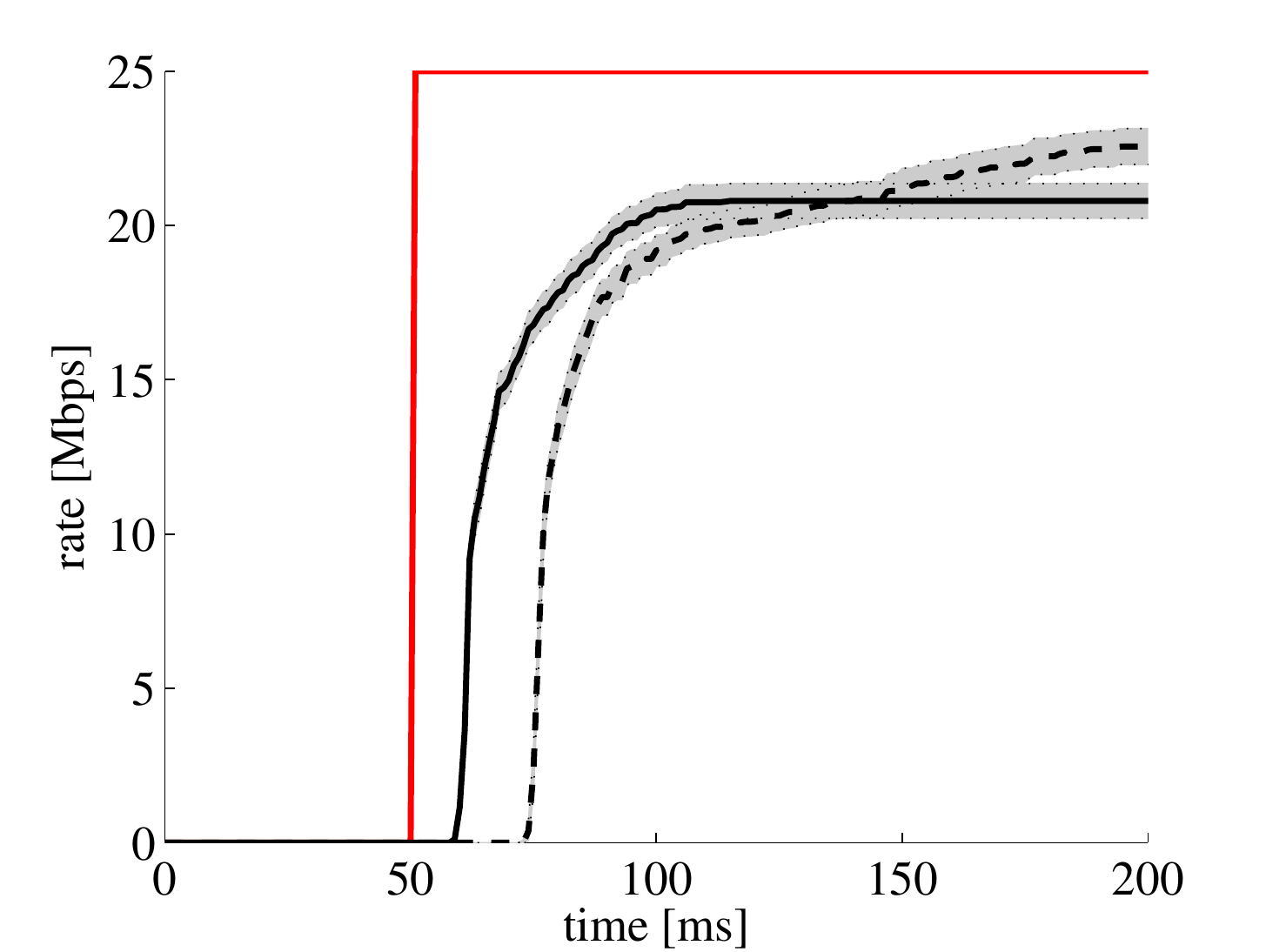, width=0.49
\linewidth}\label{fig:4node-endtoend}} }
\caption{\label{fig:end-to-end}Experiment 4: Derivative of service curves in multi-node
measurements.}

\end{figure}
\begin{table}
\begin{center}
\begin{tabular}{|l|c|c|}
 \hline
No. bottle- & End-to-End & Per-link probing \\
neck links& probing & with Eq.~(\ref{eq:alpha-new}) \\
 \hline
 2 & $[14.3, 20.5]$~Mbps &  $[15.1, 22.6 ]$~Mbps\\
 3 & $[12.2, 18.2]$~Mbps & $[13.3, 19.9]$~Mbps\\
 4 & $[11.7, 17.1]$~Mbps & $[11.8, 18.1]$~Mbps\\
 \hline
\end{tabular}
\end{center}
\caption{Pathload measurements: Multiple Bottlenecks.}
\label{tab:pathload2}
\end{table}

\subsection{Experiment 3: Multiple bottleneck links}

We now present measurements over networks with multiple  bottleneck links. Fig.~\ref{fig:dumbbell-2} depicts the network setup in Emulab with two bottleneck links. The bottleneck links have a capacity of 50~Mbps. The interarrival distribution of cross traffic is  exponential with
parameters as discussed earlier in this section.
As probing scheme, we again use rate scanning with the backlog convexity stopping criterion.

For each network, we compute the end-to-end service curve using two
 methods. In the first method, called {\em End-to-End (E2E)
Probing}, we send probe traffic end-to-end over all bottleneck links. In
the second method, referred to as {\em Convolution} we send probe traffic separately over bottleneck link and construct a service curve for each link. Then, we compute
an end-to-end service curve using the convolution operation following Eq.~(\ref{eq-conv}).
The convolution can be done efficiently in the Legendre domain, where the convolution becomes  a simple addition.

Note that in our computation of the available bandwidth over multiple links, the convolution from Eq.~(\ref{eq-conv}) replaces the minimum in the widely used Eq.~(\ref{eq:alpha-new}).
For the special case that the available bandwidths of links are constant-rate functions
the convolution over multiple links is equal to the minimum of the rates.  Formally, if $S_i (t) = r_i t$ for all $i$, we obtain
$S(t) = S_1 \conv S_2 \conv \ldots \conv S_N (t) = \min_i r_i$.
Thus, the convolution expression is a true generalization of the currently prevailing method for composing bandwidth estimation
of multiple links.

In Figs.~\ref{fig:2node-endtoend}--\ref{fig:4node-endtoend},
we present the outcomes of our experiments for two, three, and four bottleneck links, respectively. As in Fig.~\ref{fig:pareto-b}, we
present  derivatives of the service curves. We depict the
average values of 100 measurements, as well as the 95\%~confidence
intervals. The
 reference service curve (in red) is a  latency rate
service curve with a delay of 10~ms for each traversed bottleneck link and a rate equal to the average unused link capacity.
We observe that the results of E2E probing are lower at shorter time scales. Over longer time intervals, the results of
the Convolution method yields larger estimates.
The  long-term average rate of the computed service curves degrades with the number of hops.

In Table~\ref{tab:pathload2} we show as benchmark the results of {\em pathload} measurements.  We include the range of values of end-to-end probing, as well as the results of applying Eq.~(\ref{eq:alpha-new}) to per-link measurements. The degradation of available  bandwidth estimates as the number of bottleneck links is increased is similar as observed in Fig.~\ref{fig:end-to-end}. The long-term average of the service curve  in Fig.~\ref{fig:end-to-end} yields more
optimistic results than the range of values in Table~\ref{tab:pathload2}. While the results of the system theoretic approach for paths with multiple nodes are clearly encouraging,  we caution against a
generalization to other topologies and production networks.

\nocite{fidler07}

\clearpage
\section{Conclusions}
\label{sec-concl}

We have developed an interpretation of bandwidth estimation as
a problem in min-plus linear systems, where the available bandwidth
is represented by a service curve. Using service curves as opposed to constant-rate functions permits a description of  bandwidth availability at different time-scales.
We have related difficulties with network probing to non-linearities of the underlying system.
By interpreting a network as a system that is min-plus linear at low loads, and becomes non-linear when the network load exceeds a
threshold, we have argued that the crossing of the linear and non-linear regions marks the point  where the available bandwidth  can be observed.
A series of measurement experiments showed that the min-plus linear approach to bandwidth estimation lends itself to the development
of effective probing schemes. In particular, the min-plus convolution operator can be applied to obtain end-to-end estimates from per-link measurements.

\section*{Acknowledgments}
The authors thank A. Burchard for many insights and
suggestions.

\small{
\renewcommand{\baselinestretch}{1.1}
}

\end{document}